\newcommand{\secref}[1]{Sec.~\ref{#1}}
\newcommand{\figref}[1]{Fig.~\ref{#1}}
\newcommand{\cA}{\mathcal{A}}
\newcommand{\cB}{\mathcal{B}}
\newcommand{\cL}{\mathcal{L}}
\newcommand{\cG}{\mathcal{G}}
\newcommand{\cAphi}{\mathcal{A}_{\varphi}}
\newcommand{\buchi}{B{\"u}chi\xspace}
\newcommand{\LTLt}{\ensuremath{\mathrm{LTL}_3}\xspace}
\newcommand{\LTL}{\ensuremath{\mathrm{LTL}}\xspace}
\newcommand{\LTLtools}{\LTLt-tools\xspace}
\newcommand{\ltlG}{\mathbf{G}}
\newcommand{\ltlF}{\mathbf{F}}
\newcommand{\ltlU}{\mathbf{U}}
\newcommand{\ltlX}{\mathbf{X}}
\begin{document}
 
\title{Monitorability of $\omega$-regular languages}

\author{Andreas Bauer}

\institute{National ICT Australia (NICTA)\thanks{\scriptsize NICTA is
    funded by the Australian Government as represented by the Department
    of Broadband, Communications and the Digital Economy and the
    Australian Research Council through the ICT Centre of Excellence
    program.} and\\The Australian National University}

\maketitle
\thispagestyle{empty}

\begin{abstract}
  Arguably, $\omega$-regular languages play an important r\^ole as a
  specification formalism in many approaches to systems monitoring via
  runtime verification.
  However, since their elements are infinite words, not every
  $\omega$-regular language can sensibly be monitored at runtime when
  only a finite prefix of a word, modelling the observed system
  behaviour so far, is available.
  
  The \emph{monitorability of an $\omega$-regular language}, $L$, is
  thus a property that holds, if for \emph{any} finite word $u$,
  observed so far, it is possible to add another finite word $v$, such
  that $uv$ becomes a ``finite witness'' wrt.\ $L$;
  that is, for \emph{any} infinite word $w$, we have that $uvw \in L$,
  or for \emph{any} infinite word $w$, we have that $uvw \not\in L$.
  This notion has been studied in the past by several authors, and it is
  known that the class of monitorable languages is strictly more
  expressive than, e.g., the commonly used class of so-called safety
  languages.
  But an exact categorisation of monitorable languages has, so far, been
  missing.
  Motivated by the use of linear-time temporal logic (LTL) in many
  approaches to runtime verification, this paper first determines the
  complexity of the monitorability problem when $L$ is given by an LTL
  formula.  Further, it then shows that this result, in fact, transfers
  to $\omega$-regular languages in general, i.e., whether they are given
  by an LTL formula, a nondeterministic B\"uchi automaton, or even by an
  $\omega$-regular expression.
\end{abstract}

\section{Introduction}

\label{sec:intro}

In a nutshell, the term runtime verification subsumes many techniques
that are used for monitoring systems, i.e., for checking their execution
as it is happening.  Naturally, there exists a variety of different
approaches to runtime verification.
In this article, we will focus on those which are based on the theory of
formal languages, where a so called \emph{monitor} checks whether or not
a consecutive sequence of observed system actions belongs to a formally
specified language.  For example, if the language comprises all
undesired system behaviours, then a positive outcome of this check would
normally lead to the raising of an alarm by the monitor, whereas if the
language describes a desired system behaviour, the monitor could be
switched off.

As a formalism to describe such languages, many runtime verification
approaches (cf.\
\cite{rosu-havelund-2005-jase,havelund2004,DBLP:conf/cav/dAmorimR05,tum-i0724}),
use \emph{linear-time temporal logic} (LTL \cite{Pnueli77}), whose
formulae describe sets (languages) of infinite words (or,
$\omega$-languages), meaning that the models of an LTL formula are
infinitely long sequences of symbols.  The rationale for using LTL to
describe properties of systems is that many systems for which formal
verification is required (at runtime or off-line) are critical and/or
reactive; that is, their failure would have catastrophic impact on its
users and/or the environment, and consequently one would like to make
assertions about the entire lifespan of such systems, some of which are
never switched off, unless they are physically broken and can be
replaced in a controlled manner.
%
%
A typical requirement for such systems, that can also easily be
formalised in LTL, would be ``the system must never enter a bad state.''
Although the monitor would require an infinitely long observation to
flag satisfaction of the property, it is always able to raise an alarm
after finitely many observations, simply due to the fact that a
violation of such a property can always be detected in the same instance
as the system entering the bad state.
%
%
Hence, if such a property, formalised as an LTL formula, is monitored,
one would expect the monitor to only detect violations.  Formal
languages which describe properties of this form are therefore referred
to as \emph{safety} languages or safety properties, and they have in
common that all sequences of actions that violate them are detectable
after finitely many observations.
Note that languages belonging to the complementary class of safety
properties are known as the \emph{co-safety} properties, implying that
satisfaction (rather than violation) of any such type of property is
always detectable by a monitor after finitely many observations, i.e.,
via a finite ``witness.''

Since the languages definable by LTL formulae exceed the expressiveness
of safety and co-safety languages, a natural question to ask, given an
arbitrary LTL formula, is whether or not the given formula is
monitorable at all.
This is, arguably, an interesting question in its own right, 
and ideally, we would like to know the answer prior to any attempts of
building a monitor, or starting a monitoring process based on an
unmonitorable language.
Of course, what we then need is a more general notion of monitorability
of an LTL formula: Intuitively, we say that the language given by an LTL
formula is \emph{monitorable} if, after any number of observed actions,
the monitor is still able to detect the violation or satisfaction of the
monitored property, and after at most finitely many additional
observations.  
%
%
As an example of a non-monitorable, LTL-definable language consider a
property such as ``it is always the case that a request will eventually
be answered,'' which is a so called \emph{liveness} property.  For this
property no finite witnesses of violation or satisfaction exist, since
any finite sequence of actions can be extended to satisfy this property.
In order to know that some request is, indeed, never answered, a monitor
would therefore require an infinite sequence of actions.  In
consequence, most examples of liveness properties that can be found in
the literature violate the intuitive definition of monitorability given
above.
To determine whether or not an LTL formula specifies a liveness property
is a PSpace-complete problem \cite{DBLP:journals/corr/cs-LO-0101017}.
However, they are not the only types of properties, which can be
formally specified in LTL that are not monitorable, and as this paper
will show there exists no criterion that allows to answer the
monitorability question for any given formula in a simple, syntactic
manner.


Pnueli and Zaks \cite{DBLP:conf/fm/PnueliZ06} were the first to
\emph{formalise} a notion of monitorability, which matches the intuitive
account given above:
According to \cite{DBLP:conf/fm/PnueliZ06} a formula is monitorable
wrt.\ a finite sequence of actions, if that finite sequence can be
extended to be a finite witness for violation or satisfaction of that
formula.  However, Pnueli and Zaks did not address the question of
deciding monitorability for a given formula (and sequence).
In \cite{tum-i0724} a slightly more general formalisation based on a
3-valued semantics for LTL is given, such that monitorability of an LTL
formula becomes a property of only the formula.
Moreover, Falcone et al.~\cite{verimag-TR-2009-6} have recently shown
that the definition given in \cite{tum-i0724} is, indeed, a
generalisation of the one given earlier in
\cite{DBLP:conf/fm/PnueliZ06}, and termed it ``classical
monitorability.''
In their paper, they have at first wrongly concluded---but later also
corrected \cite{DBLP:conf/rv/FalconeFM09}---that the class of
monitorable languages, under classical monitorability, consists exactly
of the \emph{obligation} properties in the hierarchy of safety-progress
properties (cf.\ \cite{93442}), which is orthogonal to the
safety-liveness classification.
An obligation property, for example, is obtained by taking a positive
Boolean combination of safety and co-safety properties. 
%
%
Despite their correction, Falcone et al.\ left the question regarding
the complexity of monitorability of an LTL formula (or $\omega$-regular
language in general) open.
%
%
Note that \cite{tum-i0724} did imply a decision procedure based on the
construction and subsequent analysis of deterministic monitors for LTL
formulae, but the given procedure requires 2ExpSpace (see
Sec.~\ref{sec:mon}).

One of the main contributions of this paper is a proof that this upper
bound is not optimal, in that monitorability of an LTL formula can be
decided in PSpace.  In fact, it will show that the monitorability
problem of LTL, i.e., the decision problem that asks ``is a given LTL
formula monitorable?'' is PSpace-complete, and that this result even
transfers to $\omega$-regular languages in general---regardless as to
whether they are given by an LTL formula, a nondeterministic \buchi
automaton, or an $\omega$-regular expression.
As such it is also proof that no simple syntactic categorisation of
monitorability of an LTL formula (or $\omega$-regular language), which
could be checked in polynomial time, exists.  
On the other hand, the result implies
%
that checking monitorability is no more complex than checking safety or
co-safety, which have often served as the ``monitorable fragment'' in
the past (cf.\
\cite{DBLP:journals/tissec/Schneider00,havelund02synthesizing,havelund2004}).

As a special case the paper also considers the monitorability problem of
\buchi automata, where the automaton in question is deterministic, and
shows that this restricted form of the problem is solvable in polynomial
time. 
%
Finally, it shows that the monitorable $\omega$-languages are closed
under the usual Boolean connectives; that is, they are closed under
finitary application of union, intersection, and complementation.

\paragraph{Outline.}
The remainder 
is structured as follows.
The next section recalls some preliminary notions and notations used
throughout this paper.
Sec.~\ref{sec:mon} gives a formal account of monitorability of an
$\omega$-language and phrases the corresponding decision problem(s).
Sec.~\ref{sec:class} puts two well-known classifications of
$\omega$-regular languages, namely the classification in terms of the
safety-progress hierarchy (cf.\ \cite{93442}) as well as a topological
view, in relation with the notion of monitorability.
The main contribution of this paper, which makes use of these
classifications, can be found in sections~\ref{sec:ltl} and
\ref{sec:nba} and as such they are also the most technical sections, in
that they contain the complexity analyses and proofs of the
monitorability problems of $\omega$-regular languages.
Sec.~\ref{sec:closure} details on closure properties of monitorable
$\omega$-langues, and Sec.~\ref{sec:end} concludes.

\section{Basic notions and notation}

\label{sec:basics}

We encode information about a system's state in terms of a finite set of
atomic propositions, $AP$, and define an \emph{action} to be an element
of $2^{AP}$.  In a sense, an action can be seen as a global state that
is determined by the individual atomic sub-states encoded by elements
from $AP$.  We will therefore use the terms action and state
synonymously.  The system behaviour which the monitor observes then
consists of a sequence of actions.  Therefore, we define an
\emph{alphabet}, $\Sigma := 2^{AP}$, and treat consecutive sequences of
actions as \emph{words} over $\Sigma$.
As is common, we define $\Sigma^\ast$ as the set of all finite words
over $\Sigma$, including the empty word, and $\Sigma^\omega$ to be the
set of infinite words obtained by concatenating an infinite sequence of
nonempty words over $\Sigma$.
%
Infinite words are of the form $w = w_0w_1\ldots \in \Sigma^\omega$ and
are usually abbreviated by $w, w'$, and so on, whereas finite words are
of the form $u = u_0\ldots u_n \in \Sigma^\ast$ and are usually
abbreviated by $u, u', v$, and so on.  Let $w \in \Sigma^\omega$, then
$w^i$ denotes the infinite suffix $w_iw_{i+1}\ldots$, whereas $u \preceq
w$ denotes a \emph{prefix} of $w$.
$u$ is a \emph{proper prefix} of $w$ ($u \prec w$), if $u \preceq w$ and
$u \neq w$.
For any $p \in AP$, and a given $\sigma \in \Sigma$, if $p \in \sigma$
holds, we also say that ``$p$ holds (or, is true) in the state
$\sigma$''.  If $p \not\in \sigma$, then ``$p$ does not hold (or, is not
true) in state $\sigma$.''

The syntax of LTL formulae, which are given by the set $\LTL(AP)$, is
defined as follows: $\varphi ::= p \mid \neg \varphi \mid \varphi \vee
\varphi \mid \ltlX \varphi \mid \varphi \ltlU \varphi$, with $p \in AP$.
If the set of atomic propositions is clear from the context, we  
write LTL instead of $\LTL(AP)$.  LTL formulae are interpreted over
elements from $\Sigma^\omega$ as follows.  Let $i \in N$, and
$\varphi,\psi \in \LTL$, then
\[
\begin{array}{lcl}
  w^i \models p & \Leftrightarrow & p \in w_i\\
  w^i \models \neg\varphi & \Leftrightarrow & w^i \not\models \varphi\\
  w^i \models \varphi \vee \psi &\Leftrightarrow & w^i \models \varphi \vee w^i \models \psi\\
  w^i \models \ltlX\varphi & \Leftrightarrow & w^{i+1} \models \varphi\\
  w^i \models \varphi \ltlU \psi &\Leftrightarrow& \exists k \geq i.\ w^k \models \psi \wedge \forall i \leq j < k.\ w^j \models \varphi
\end{array}
\]
Further, we will make use of the usual syntactic sugar such as $true
\equiv p \vee \neg p$, $false = \neg true$, $\varphi \wedge \psi \equiv
\neg(\neg \varphi \vee \neg \psi)$, $\ltlF\varphi \equiv true \ltlU
\varphi$, and $\ltlG\varphi \equiv \neg (\ltlF\neg \varphi)$.
%

It is well-known that, for any $\varphi \in \LTL$, we can construct a
\emph{nondeterministic \buchi automaton} (NBA), $\cAphi = (\Sigma, Q,
Q_0, \delta, F)$, where $\Sigma$ is the alphabet, $Q$ the set of states,
$Q_0 \subseteq Q$ designated initial states, $\delta: Q \times \Sigma
\rightarrow 2^Q$ the transition relation, and $F \subseteq Q$ a set of
final states, such that the accepted language of $\cAphi$ contains
exactly all the models of $\varphi$, i.e., $\cL(\cAphi) = \cL(\varphi)$.
If some language of infinite words, called an $\omega$-language,
$L \subseteq \Sigma^\omega$ is such that there exists an NBA, $\cA$,
such that $\cL(\cA) = L$, then $L$ is called
\emph{$\omega$-regular}. Obviously, the language specified by an LTL
formula is always $\omega$-regular.
The size of $\cAphi$, usually measured wrt.\ $|Q|$, is, in the
worst-case, exponential wrt.\ the size of $\varphi$.
For details on the construction as well as further properties of
$\cAphi$, cf.\ \cite{114895}.


\section{When is an $\omega$-language monitorable?}
\label{sec:mon}

Let us fix an 
$L \subseteq \Sigma^\omega$ for the remainder of
this section.  In accordance with \cite{DBLP:conf/fm/PnueliZ06} and
\cite{tum-i0724}, Falcone et al.~\cite{verimag-TR-2009-6} formally
define the monitorability of an $\omega$-language as follows.

\begin{definition}
  \label{def:mon1}
  $L$ is called
  \begin{itemize}
  \item \emph{negatively determined} by $u \in \Sigma^\ast$, if 
    $u\Sigma^\omega \cap L = \emptyset$;
  \item \emph{positively determined} by $u \in \Sigma^\ast$, if
    $u\Sigma^\omega \subseteq L$;
  \item \emph{$u$-monitorable} for $u \in \Sigma^\ast$, if $\exists v
    \in \Sigma^\ast$, s.t.\ $L$ is positively or negatively determined
    by $uv$;
  \item \emph{monitorable}, if it is $u$-monitorable for any $u \in
    \Sigma^\ast$.
  \end{itemize}
\end{definition}
This also lends itself to another, sometimes more intuitive way to think
about monitorability of an $\omega$-language, namely in terms of
\emph{good} and \emph{bad} prefixes.

\begin{definition}
  The set of \emph{good} and \emph{bad prefixes} for $L$ are defined as
  $
  good(L) := \{ u \in \Sigma^\ast \mid u\Sigma^\omega \subseteq L \}
  $
  and
  $bad(L) := \{ u \in \Sigma^\ast \mid u \Sigma^\omega \cap L = \emptyset \}$,
  respectively.
\end{definition}
For brevity, we  also write $good(\varphi)$ (respectively,
$bad(\varphi)$) short for $good(\cL(\varphi))$ (respectively,
$bad(\cL(\varphi))$), and $good(\cA)$ (respectively, $bad(\cA)$) short
for $good(\cL(\cA))$ (respectively, $bad(\cL(\cA))$).

\begin{proposition}
  \label{prop:mon}
  $L$ 
  is \emph{monitorable} if
  $\forall u \in \Sigma^*.\ \exists v \in \Sigma^*.\ uv \in good(L) \vee
  uv \in bad(L)$.
  %
\end{proposition}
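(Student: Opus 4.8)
The plan is to show that this proposition is merely a restatement of the monitorability clause of Definition~\ref{def:mon1}, obtained by rephrasing ``positively/negatively determined'' in terms of the good- and bad-prefix sets. First I would observe that, directly from the definitions, $good(L)$ is exactly the set of finite words that positively determine $L$: for any $w \in \Sigma^\ast$, the condition $w \in good(L)$ reads $w\Sigma^\omega \subseteq L$, which is precisely the defining condition for $L$ being positively determined by $w$. Dually, $w \in bad(L)$ reads $w\Sigma^\omega \cap L = \emptyset$, which is precisely $L$ being negatively determined by $w$. Hence these two sets collect exactly the witnesses appearing in the $u$-monitorability clause.

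Next I would instantiate these identifications at $w = uv$. By Definition~\ref{def:mon1}, $L$ is $u$-monitorable iff there is some $v \in \Sigma^\ast$ such that $L$ is positively or negatively determined by $uv$; by the previous observation this is the same as saying there is some $v$ with $uv \in good(L)$ or $uv \in bad(L)$, i.e.\ $\exists v \in \Sigma^\ast.\ uv \in good(L) \vee uv \in bad(L)$. The disjunction ``positively or negatively'' simply becomes the disjunction inside the scope of the existential quantifier, which is sound since each of the two determinations for $uv$ corresponds verbatim to one of the two membership conditions.

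Finally, Definition~\ref{def:mon1} calls $L$ monitorable exactly when it is $u$-monitorable for every $u \in \Sigma^\ast$. Prefixing the equivalence just established with the universal quantifier over $u$ yields $\forall u \in \Sigma^\ast.\ \exists v \in \Sigma^\ast.\ uv \in good(L) \vee uv \in bad(L)$, which is the claimed characterisation. I expect no genuine obstacle here: the statement is a pure unfolding of definitions, and the only point that requires a moment's care is confirming that no information is lost in translating ``positively/negatively determined by $uv$'' into membership in $good(L)$ and $bad(L)$ — which holds immediately by the way these sets are defined.
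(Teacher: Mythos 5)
Your proof is correct and matches the paper's own treatment: the paper states Proposition~\ref{prop:mon} without proof, precisely because it is the definitional unfolding you carry out, identifying $good(L)$ with the positively determining words, $bad(L)$ with the negatively determining words, and then quantifying as in Definition~\ref{def:mon1}. Nothing is missing, and no different route is taken.
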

In other words, $L$ is \emph{not} monitorable if there exists a finite
word $u\in\Sigma^\ast$ for which we can not find a finite extension $v
\in \Sigma^\ast$, such that $uv$ is either a good or a bad prefix of
$L$.  
Naturally, given some $L$, not every finite word is a good or a bad
prefix of $L$, in which case we call such a word \emph{undetermined}
(wrt.\ $L$).  Let $u \in \Sigma^\ast$ be an undetermined prefix, then,
depending on $L$, the following scenarios are possible: we can find a
finite extension $v \in \Sigma^\ast$, such that $uv \in good(L)$, we can
find a finite extension $v$, such that $uv \in bad(L)$, or there does
not exist a finite extension $v$, such that $uv \in good(L)$ or $uv \in
bad(L)$ would hold.
In \cite{tum-i0724}, the latter were called ``ugly'' prefixes, and $L$
``non-monitorable,'' if there exists an ugly prefix for it.

Let us now 
define the monitorability problem of an
$\omega$-language as follows.  
%
%
\begin{definition}
  The \emph{monitorability problem} for some
  $L$ 
  is the
  following decision problem:\\
  \textbf{Given:} A set $L \subseteq \Sigma^\omega$.\\
  \textbf{Question:} Does $\forall u \in \Sigma^*.\ \exists v \in
  \Sigma^*.\ uv \in good(L) \vee uv \in bad(L)$ hold?
\end{definition}
When $L$ is given in terms of an LTL formula, an NBA, or an
$\omega$-regular expression (which are basically defined like ordinary
regular expressions, augmented with an operator for infinite repetition
of a regular set, cf.\ \cite{114895}), we call this problem the
\emph{monitorability problem of $\omega$-regular languages}, or---more
specifically---the \emph{monitorability problem of LTL/\buchi
  automata/$\omega$-regular expressions}, respectively.
%


One of the main contributions of \cite{tum-i0724} was a procedure that,
given a formula $\varphi \in \LTL$, constructs a deterministic
finite-state machine (i.e., a monitor for $\varphi$) whose input is a
consecutively growing, finite word $u \in \Sigma^\ast$, and whose output
is $\top$ if $u \in good(\varphi)$, $\bot$ if $u \in bad(\varphi)$, and
$?$ if $u$ is undetermined.
Once this monitor is computed, the monitorability of $\varphi$ can be
determined in polynomial time, simply by checking if there exists a
state whose output is $?$ with no path leading to a $\top$- or
$\bot$-state.  If such a state, called a $?$-trap, exists, then
$\varphi$ is not monitorable.
Notice, however, that this monitor construction (and this decision
procedure) requires 2ExpSpace:
as a first step, it creates two NBAs, one which accepts all models of
$\varphi$ and one that accepts all counterexamples of $\varphi$ (i.e.,
all models of $\neg\varphi$), and then proceeds by examining and
transforming the resulting state graphs of these automata.  Recall, NBAs
accepting the models of an LTL formula are, in the worst case,
exponentially larger than the corresponding formula.  Since at some
point, the two automata are made deterministic, the double exponential
``blow up'' follows.
Moreover, although not explicitly mentioned in \cite{tum-i0724}, by
altering the first step of this procedure, it can be used to decide the
monitorability of $\omega$-regular languages, in general, i.e., whether
given as an LTL formula, as NBA, or as an $\omega$-regular expression.
For example, if instead of a formula, an NBA is given, one has to
explicitly complement this automaton, which also involves a worst-case
exponential ``blow up'' wrt.\ the number of states of the original NBA.
However, then the rest of the procedure described in \cite{tum-i0724}
stays the same.  On the other hand, if we are given an $\omega$-regular
expression instead, we first have to build an NBA, which can occur in
polynomial time.  Then, in order to get the complementary language, one
also needs to complement this automaton.  Hence, independent of the
concrete representation of an $\omega$-regular language, the
construction and subsequent analysis of the corresponding monitor can
decide monitorability in 2ExpSpace.
Therefore, indirectly, \cite{tum-i0724} shows decidability of the
monitorabiliy problem, but whether or not this bound is tight was left
open in that paper. 

\paragraph{Examples.}
Let us examine some examples to understand how this construction works
and what its outcome is.
Fig.~\ref{fig:ex} depicts some finite state machines (i.e., the
monitors) for several LTL formulae, which were automatically generated
using the \LTLtools\footnote{Available under an open source license at
  \textsf{http://LTL3tools.SourceForge.Net/}}, which are written by the
author of this paper and implement the above construction.
Each monitor is complete in a sense that for every action from the
alphabet, there exists a transition.  Note that, although not explicitly
marked, the initial state is the top-most $?$-state, respectively.  Any
word $u\in\Sigma^\ast$ which has a corresponding path in a monitor to a
$?$-state is undetermined wrt.\ the $\omega$-language being monitored.
On the other hand, if $u$ leads to a state labelled $\top$
(respectively, $\bot$), then $u$ is a good (respectively, bad) prefix
of the $\omega$-language being monitored.
\begin{figure}[tb]
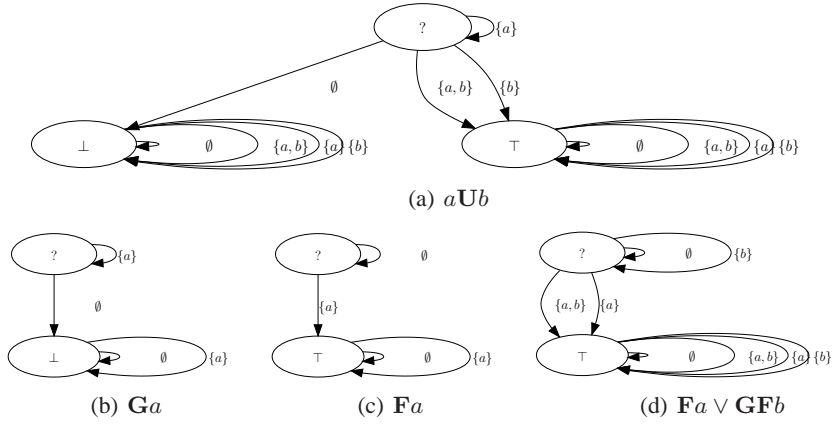

  \centering
  \subfigure[$a \ltlU b$]{\label{fig:ex:aUb}
    \scalebox{0.4}{\input{aUb.tex}}}
  \subfigure[$\ltlG a$]{\label{fig:ex:Ga}
    \scalebox{0.35}{\input{Ga.tex}}}
  \subfigure[$\ltlF a$]{\label{fig:ex:Fa}
    \scalebox{0.35}{\input{Fa.tex}}}
  \subfigure[$\ltlF a \vee \ltlG\ltlF b$]{\label{fig:ex:FaGFb}
    \scalebox{0.35}{\input{FaOrGFb.tex}}}
  \caption{Some example monitors and corresponding LTL specifications.}
  \label{fig:ex}
\end{figure}
It is easy to see, that all the formulae give rise to a monitorable
language; that is, from any reachable state in the respective monitor,
there always exists a path to a state labelled either $\top$ or $\bot$.
Let us, therefore, also present a language which is not monitorable and
whose (practically not very useful) monitor is depicted in
\figref{fig:ex:non}:
\begin{figure}[tb]
  \centering
  \scalebox{0.4}{\input{aAndXGFb.tex}}
  \caption{``Monitor'' for a non-monitorable language given by $a \wedge
    \ltlX (\ltlG\ltlF b)$.}
  \label{fig:ex:non}
\end{figure}
Clearly, the right-most state is a $?$-trap; that is, once reached by
some finite prefix $u \in \Sigma^\ast$, there exists no extension $v \in
\Sigma^\ast$ for $u$, such that a $\top$- or a $\bot$-state can be
reached.
Or, in other words, every word $u = u_0 \ldots u_n$, such that $a \in
u_0$ is an ugly prefix of $\cL(a \wedge \ltlX (\ltlG\ltlF b))$.

%


\section{A classification of $\omega$-languages}
\label{sec:class}

\subsection{The safety-liveness view}

\label{sec:props}

Alpern and Schneider \cite{alpern87recognizing} were the first to give a
formal characterisation of $\omega$-languages in terms of \emph{safety}
and \emph{liveness} properties.  This view was subsequently extended to
an entire hierarchy of $\omega$-languages, where languages defining
safety properties, or their complement are at the bottom (cf.\
\cite{93442}).

\begin{definition}
  $L$ describes a \emph{safety}
  language (also called a safety property), if $\forall w \not\in L.\
  \exists u \prec w.\ u\Sigma^\omega \cap L = \emptyset$.
  $L$ describes a \emph{co-safety} language (also called co-safety
  property), if $\forall w \in L.\ \exists u \prec w.\ u\Sigma^\omega
  \subseteq L$.
\end{definition}
In other words, if $L$ specifies a safety language, then all infinite
words $w \not\in L$, have a bad prefix.  On the other hand, if $L$
specifies a co-safety language, then all infinite words $w \in L$, have
a good prefix.
This also explains why safety and co-safety properties lend themselves
so well to runtime verification using monitors: if the specification to
be monitored gives rise to a safety language, then all violations of the
specification are detectable by the monitor after only finitely many
observations of actions emitted by the system under scrutiny.
That is, let $u \in \Sigma^\ast$ be a word, resembling the sequence of
actions, then either there exists a $v\in\Sigma^\ast$, such that $uv \in
bad(L)$, or $u \in good(L)$ already holds.  
On the other hand, if the specification to be monitored gives rise to a
co-safety language, then all models of the specification are detectable
by the monitor after only finitely many observations\footnote{From this
  point forward, we will omit the use of precisifications such as
  ``emitted by the system under scrutiny'', etc.\ and simply speak of
  abstract words and actions, when the context is clear or simply does
  not matter}.
That is, let $u \in \Sigma^\ast$, then either there exists a $v \in
\Sigma^\ast$, such that $uv \in good(L)$, or $u \in bad(L)$ already
holds.
%
%
It follows that a co-safety language always has the form
$B\Sigma^\omega$, where $B\subseteq \Sigma^\ast$.

Safety and co-safety are dual in a sense that if $L$ is a safety
language, then $\Sigma^\omega \backslash L$, from this point forward
also abbreviated as $\overline{L}$, is a co-safety language, and vice
versa.
The following easy to prove proposition makes this duality formal.

\begin{proposition}
  \label{prop:prefixes}
  $bad(L) = good(\overline{L})$ and $bad(\overline{L}) = good(L)$.
\end{proposition}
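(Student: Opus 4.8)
The plan is to prove both equalities by unfolding the definitions of $good$ and $bad$ and reducing everything to a single elementary set-theoretic observation about subsets of $\Sigma^\omega$. The key fact I would isolate first is this: for \emph{any} set $A \subseteq \Sigma^\omega$, we have $A \cap L = \emptyset$ if and only if $A \subseteq \overline{L}$. This is just the standard equivalence ``disjoint from $L$ iff contained in the complement of $L$,'' and it is valid precisely because $A$ already lives inside the ambient universe $\Sigma^\omega$ within which the complement $\overline{L} = \Sigma^\omega \setminus L$ is taken. Since every set of the form $u\Sigma^\omega$ with $u \in \Sigma^\ast$ is a subset of $\Sigma^\omega$, this observation applies directly with $A := u\Sigma^\omega$.

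Granting that, the first equality is immediate. By definition $u \in bad(L)$ means $u\Sigma^\omega \cap L = \emptyset$, and by the isolated fact this holds iff $u\Sigma^\omega \subseteq \overline{L}$, which is exactly the defining condition for $u \in good(\overline{L})$. Reading these equivalences off for every $u \in \Sigma^\ast$ yields $bad(L) = good(\overline{L})$ as an equality of subsets of $\Sigma^\ast$.

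For the second equality I would simply instantiate the first one at $\overline{L}$ in place of $L$. This gives $bad(\overline{L}) = good(\overline{\overline{L}})$, and since complementation within $\Sigma^\omega$ is an involution, $\overline{\overline{L}} = L$, so $bad(\overline{L}) = good(L)$. (Alternatively, one can unfold directly: $u \in bad(\overline{L})$ iff $u\Sigma^\omega \cap \overline{L} = \emptyset$ iff $u\Sigma^\omega \subseteq L$ iff $u \in good(L)$, using the same elementary fact.)

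I do not anticipate any genuine obstacle here: the statement is purely a matter of rewriting set membership, and the paper itself flags it as ``easy to prove.'' The only point deserving a moment of care is to make explicit that complements are taken relative to $\Sigma^\omega$ and that $u\Sigma^\omega \subseteq \Sigma^\omega$ always holds, which is what makes the disjointness/containment equivalence legitimate; once that is stated, both identities follow by one line each.
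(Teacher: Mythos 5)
Your proof is correct, and it is exactly the standard unfolding-of-definitions argument the paper has in mind when it flags the proposition as ``easy to prove'' (the paper itself gives no explicit proof). The key equivalence $u\Sigma^\omega \cap L = \emptyset \Leftrightarrow u\Sigma^\omega \subseteq \overline{L}$, plus the involution $\overline{\overline{L}} = L$ for the second identity, is precisely what is needed, and your care about complements being taken relative to $\Sigma^\omega$ is appropriate.
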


\begin{definition}
  $L$ describes a \emph{liveness} language (also called a liveness
  property) if $\forall u \in \Sigma^\ast.\ u\Sigma^\omega \cap L \neq
  \emptyset.$
\end{definition}
In other words, if $L$ specifies a liveness language, then $bad(L) =
\emptyset$---in which case $L$ may only be monitorable if $good(L) \neq
\emptyset$ also holds.  The definition of liveness, however, does not
require $good(L)$ to be empty or non-empty.  Hence, from a runtime
verification point of view, many liveness languages which are commonly
used to describe system properties in the area of formal verification
using, say, temporal logic model checking, turn out to be not
monitorable.

\paragraph{Examples.}
Let us look at some example languages, specified in terms of LTL
formulae.  The formula $\varphi = \ltlG \neg\mathsf{bad\_state}$ with
$\mathsf{bad\_state} \in AP$ formalises, in an abstract manner, the
requirement from the introduction: the system must never enter a bad
state.  In other words, $\neg\mathsf{bad\_state}$ must always be true.
It is a safety property as any prefix containing a state in which
$\mathsf{bad\_state}$ is true is a bad prefix of $\varphi$, e.g.,
$
u = \emptyset\emptyset\emptyset\emptyset\ldots\{\mathsf{bad\_state}\} \in bad(\varphi).
$
Naturally, $\neg\varphi = \neg (\ltlG \neg\mathsf{bad\_state}) = \ltlF
\mathsf{bad\_state}$ describes a co-safety property.  Any finite word
containing a state where the proposition $\mathsf{bad\_state}$ is true
is a good prefix for $\neg\varphi$. 
In practical terms, this means that a monitor, checking either language
will be able to make a conclusive decision after the first occurrence of
$\mathsf{bad\_state}$ in the observed sequence of system actions.
In fact, we can postulate the following proposition which, using
Proposition~\ref{prop:prefixes}, is easy to prove formally:
\begin{proposition}
  \label{prop:safe}
  If $\varphi$ specifies a safety or a co-safety language, then
  $\varphi$ is monitorable.
\end{proposition}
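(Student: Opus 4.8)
The plan is to reduce the proposition to its safety half and then obtain the co-safety half for free via the duality recorded in Proposition~\ref{prop:prefixes}. Concretely, I would first prove the following claim: if $L = \cL(\varphi)$ is a safety language, then every undetermined finite word can be extended to a \emph{bad} prefix. Granting this, the safety case is immediate. For the co-safety case, note that $L$ co-safety means $\overline{L}$ is safety, so $\overline{L}$ is monitorable by the claim. Since $good(L) = bad(\overline{L})$ and $bad(L) = good(\overline{L})$ by Proposition~\ref{prop:prefixes}, the monitorability condition for $L$, namely $\forall u.\ \exists v.\ uv \in good(L) \vee uv \in bad(L)$, is literally the same statement as monitorability of $\overline{L}$, so $L$ is monitorable as well.

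To establish the claim, I would fix an arbitrary $u \in \Sigma^\ast$. If $u \in good(L)$ or $u \in bad(L)$ I take $v = \varepsilon$ and am done, so I may assume $u$ is undetermined. Undeterminedness gives in particular $u\Sigma^\omega \not\subseteq L$, hence there exists some $w \in \Sigma^\omega$ with $uw \not\in L$. Applying the definition of safety to the word $uw \not\in L$ yields a proper prefix $u' \prec uw$ with $u'\Sigma^\omega \cap L = \emptyset$, i.e.\ $u' \in bad(L)$.

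The key—and only subtle—step is to argue that this bad prefix $u'$ is an \emph{extension} of $u$ rather than a proper prefix of it. Since $u'$ and $u$ are both prefixes of the single word $uw$, they are comparable; and if we had $u' \preceq u$, then $u\Sigma^\omega \subseteq u'\Sigma^\omega$ would force $u\Sigma^\omega \cap L = \emptyset$, that is $u \in bad(L)$, contradicting that $u$ is undetermined. Hence $u \preceq u'$, so $u' = uv$ for some $v \in \Sigma^\ast$, and $uv = u' \in bad(L)$, which is exactly what monitorability demands. The co-safety case can then be handled either by the reduction above or, symmetrically, by running the same argument with a witness $w \in \Sigma^\omega$ satisfying $uw \in L$ and producing a \emph{good} extension. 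I expect the comparability-of-prefixes step to be the main (albeit routine) obstacle, as it is where the hypothesis that $u$ is undetermined is essential to rule out the degenerate possibility that safety hands back a prefix sitting strictly inside $u$.
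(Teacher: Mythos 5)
Your proof is correct and takes essentially the same route the paper intends: the paper only sketches the argument (the discussion preceding the proposition already states that for a safety language any observed $u$ either lies in $good(L)$ or extends to a bad prefix, dually for co-safety, and that the claim ``is easy to prove formally'' using Proposition~\ref{prop:prefixes}). Your write-up---handling the safety case by applying the safety definition to a witness $uw \not\in L$ and using comparability of prefixes of $uw$ to rule out $u' \preceq u$, then transferring to co-safety via the $good$/$bad$ duality of Proposition~\ref{prop:prefixes}---is a careful formalisation of exactly that sketch, with no gaps.
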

It is also easy to verify that the formula $\ltlF\mathsf{bad\_state}$
meets the definitions of both co-safety and liveness.
However, as we will see, not all co-safety languages are also liveness
languages, and vice versa.  In fact, unlike $\ltlF\mathsf{bad\_state}$
most liveness languages do not lend themselves to runtime verification
via monitors, because they may have neither bad nor good prefixes that
would eventually lead a monitor to a conclusive answer.
The liveness property given in the introduction, and formalised in LTL
as $\ltlG(\mathsf{request} \rightarrow \ltlF \mathsf{answer})$, is such
a case.
The following proposition is 
 easy to prove:
\begin{proposition}
  \label{prop:live}
  If $\varphi$ specifies a liveness language, such that $good(\varphi) =
  \emptyset$, then $\varphi$ is not monitorable.
\end{proposition}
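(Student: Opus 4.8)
The plan is to exploit the two hypotheses directly: together they leave no finite word that is either a good or a bad prefix, so monitorability must fail already at the empty prefix. First I would recall the observation made immediately after the definition of liveness: if $\cL(\varphi)$ is a liveness language, then $\forall u \in \Sigma^\ast.\ u\Sigma^\omega \cap \cL(\varphi) \neq \emptyset$, and hence by the definition of $bad$ no finite word lies in $bad(\varphi)$; that is, $bad(\varphi) = \emptyset$. Combining this with the hypothesis $good(\varphi) = \emptyset$, I obtain the crucial intermediate fact that for \emph{every} $x \in \Sigma^\ast$ we have simultaneously $x \notin good(\varphi)$ and $x \notin bad(\varphi)$.

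Next I would instantiate the characterisation of monitorability from Proposition~\ref{prop:mon}. To establish that $\varphi$ is not monitorable it suffices, by Definition~\ref{def:mon1}, to exhibit a single $u$ that is not $u$-monitorable. I would take $u = \varepsilon$, the empty word. For an arbitrary $v \in \Sigma^\ast$, the extension $uv = v$ is again a finite word, so the intermediate fact gives $v \notin good(\varphi)$ and $v \notin bad(\varphi)$. Thus $uv \in good(\varphi) \vee uv \in bad(\varphi)$ fails for every $v$, which is exactly the negation of $u$-monitorability for $u = \varepsilon$, and therefore $\varphi$ is not monitorable.

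There is essentially no hard step here: the result is a direct unfolding of the definitions, the only substantive move being the translation of the liveness hypothesis into $bad(\varphi) = \emptyset$. The two emptiness hypotheses are in fact stronger than what the conclusion requires, since they show $\varphi$ is not $u$-monitorable for \emph{any} $u$, not merely for one witnessing prefix. The single point deserving a word of care is that this argument already applies to the empty prefix, so the non-monitorability does not depend on the system having emitted any actions.
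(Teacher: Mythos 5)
Your proof is correct and follows exactly the argument the paper sketches in the text preceding the proposition: liveness gives $bad(\varphi) = \emptyset$, so together with $good(\varphi) = \emptyset$ the condition of Proposition~\ref{prop:mon} fails (already at $u = \varepsilon$). The paper leaves this as ``easy to prove,'' and your write-up is a faithful, complete unfolding of that intended argument.
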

It follows that a formula of the form $\varphi = \ltlG\ltlF (\psi)$ is
not monitorable unless $\psi = true$ or $\psi = false$.  In the first
case we would get $\cL(\varphi) = \Sigma^\omega$, and in the latter
$\cL(\varphi) = \emptyset$, both of which meet the definition of
monitorability.  In fact, the languages given by the sets $\emptyset$
and $\Sigma^\omega$ are both safety and co-safety.

As a final example, let us consider \emph{obligation} languages.  In
\cite{93442}, Manna and Pnueli define the class of obligation languages
as follows.
\begin{definition}
  $L$ describes an \emph{obligation} language (also called an obligation
  property) if $L$ either consists of an unrestricted Boolean
  combination of safety languages, or an unrestricted Boolean
  combination of co-safety languages, or a positive Boolean combination
  of safety and co-safety languages.
\end{definition}
Falcone et al.~\cite{DBLP:conf/rv/FalconeFM09} have shown that
\begin{proposition}
  \label{prop:obligation}
  If $\varphi$ specifies an obligation language, then $\varphi$ is
  monitorable.
\end{proposition}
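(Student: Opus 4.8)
The plan is to exploit the fact that an obligation language is, by definition, a finite Boolean combination (built from $\cup$, $\cap$, and complementation) of safety and co-safety languages, each of which is already known to be monitorable by Proposition~\ref{prop:safe}. It therefore suffices to establish that the class of monitorable $\omega$-languages is closed under complementation, finite intersection, and finite union: the three syntactic shapes permitted by the definition of obligation (an unrestricted combination of safety languages, an unrestricted combination of co-safety languages, and a positive combination of both) are all captured by these operations, so monitorability of the atoms will propagate to $L$.

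Closure under complementation is immediate. By Proposition~\ref{prop:prefixes} we have $good(\overline{L}) = bad(L)$ and $bad(\overline{L}) = good(L)$, and the monitorability condition $\forall u\in\Sigma^\ast.\ \exists v\in\Sigma^\ast.\ uv\in good(L)\vee uv\in bad(L)$ treats the sets $good(L)$ and $bad(L)$ symmetrically; hence $L$ is monitorable iff $\overline{L}$ is.

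The heart of the argument is closure under intersection. Given monitorable $L_1,L_2$ and an arbitrary $u\in\Sigma^\ast$, I would first apply monitorability of $L_1$ to obtain $v_1$ with $uv_1\in good(L_1)$ or $uv_1\in bad(L_1)$. In the bad case $uv_1\Sigma^\omega\cap L_1=\emptyset$ forces $uv_1\Sigma^\omega\cap(L_1\cap L_2)=\emptyset$, so $uv_1\in bad(L_1\cap L_2)$ and we are done. In the good case $uv_1\Sigma^\omega\subseteq L_1$; crucially this property is \emph{absorbing}, i.e.\ it persists under every further extension because cones only shrink, so on the whole cone $uv_1\Sigma^\omega$ the language $L_1\cap L_2$ coincides with $L_2$. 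Applying monitorability of $L_2$ to the prefix $uv_1$ then yields $v_2$ with $uv_1v_2\in good(L_2)$ or $uv_1v_2\in bad(L_2)$, and the coincidence on the cone transfers this verdict verbatim to $L_1\cap L_2$. Closure under union follows from the previous two steps via $L_1\cup L_2=\overline{\overline{L_1}\cap\overline{L_2}}$, and both operations extend to finite combinations by iteration.

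I expect the intersection step to be the main obstacle, in particular the verification that a verdict reached for $L_2$ inside the cone $uv_1\Sigma^\omega$ is legitimate for $L_1\cap L_2$: one must check that $uv_1v_2\in good(L_2)$ gives $uv_1v_2\Sigma^\omega\subseteq L_1\cap L_2$ (using that the whole cone already lies in $L_1$), while $uv_1v_2\in bad(L_2)$ gives $uv_1v_2\Sigma^\omega\cap(L_1\cap L_2)\subseteq uv_1v_2\Sigma^\omega\cap L_2=\emptyset$. Once these closure lemmas are in place, the conclusion is immediate: every obligation language is a finite Boolean combination of safety and co-safety languages, all monitorable by Proposition~\ref{prop:safe}, so the obligation language, and hence $\varphi$, is monitorable as well.
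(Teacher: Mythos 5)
Your proposal is correct, but it does not follow the paper's route, because the paper gives no proof of this proposition at all: it is stated as a result of Falcone et al.~\cite{DBLP:conf/rv/FalconeFM09} and simply cited. What you have done instead is derive it self-containedly from material the paper does prove: monitorability of safety and co-safety languages (Proposition~\ref{prop:safe}) plus closure of the monitorable languages under complement, intersection, and union. Notably, your closure arguments coincide almost verbatim with the paper's own Sec.~\ref{sec:closure}: your intersection step (extend $u$ by $v_1$ for $L_1$; in the bad case conclude $uv_1\in bad(L_1\cap L_2)$ immediately; in the good case use that $uv_1\Sigma^\omega\subseteq L_1$ is preserved under further extension and invoke monitorability of $L_2$ on $uv_1$) is exactly the paper's proof of its first closure proposition, your complementation step is its second (Proposition~\ref{prop:prefixes} applied to Proposition~\ref{prop:mon}), and your union step via $L_1\cup L_2=\overline{\overline{L_1}\cap\overline{L_2}}$ is Theorem~\ref{thm:closure}. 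Your verification of the delicate point---that a verdict for $L_2$ reached inside the cone $uv_1\Sigma^\omega$ transfers verbatim to $L_1\cap L_2$---is sound: $uv_1v_2\Sigma^\omega\subseteq uv_1\Sigma^\omega\subseteq L_1$ handles the good case, and $uv_1v_2\Sigma^\omega\cap(L_1\cap L_2)\subseteq uv_1v_2\Sigma^\omega\cap L_2=\emptyset$ the bad case. What your approach buys is a unified treatment of all three syntactic shapes in the paper's definition of obligation (unrestricted combinations of safety, unrestricted combinations of co-safety, positive combinations of both), each reduced by structural induction to the same three closure lemmas; the cost is a forward dependency on Sec.~\ref{sec:closure}, which in the paper appears only after this proposition---harmless here, since the closure proofs nowhere use Proposition~\ref{prop:obligation}, but worth noting if one wanted to reorder the presentation.
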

To see that the other direction 
is not true,
consider the counterexample given by the formula in
\figref{fig:ex:FaGFb}: it does not specify an obligation language, yet
it is monitorable.



\subsection{The corresponding topological view}

\label{sec:top}

Alpern and Schneider 
showed in \cite{alpern87recognizing} that
%
\begin{proposition}
  \label{prop:dt}
  Every language $L$ can be represented as the intersection
  $L = L_S \cap L_L,$
  where $L_S$ is a safety language, and $L_L$ is a liveness language.
\end{proposition}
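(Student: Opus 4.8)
The plan is to use the classical topological decomposition built from the \emph{safety closure} of $L$. First I would define
\[
L_S := \{ w \in \Sigma^\omega \mid \forall u \preceq w.\ u\Sigma^\omega \cap L \neq \emptyset \},
\]
that is, the set of all infinite words every finite prefix of which is still the prefix of some word in $L$. Two facts are then immediate. First, $L \subseteq L_S$, since for $w \in L$ and any $u \preceq w$ we have $w \in u\Sigma^\omega \cap L \neq \emptyset$. Second, $L_S$ is a safety language: if $w \notin L_S$, then by definition there is a prefix $u \preceq w$ with $u\Sigma^\omega \cap L = \emptyset$, and this same $u$ forces every $w' \in u\Sigma^\omega$ out of $L_S$ as well, so $u\Sigma^\omega \cap L_S = \emptyset$; since $u$ is finite and $w$ infinite we have $u \prec w$ automatically, which is exactly the safety condition $\exists u \prec w.\ u\Sigma^\omega \cap L_S = \emptyset$. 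Intuitively $L_S$ is the smallest safety language containing $L$.

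Next I would take the liveness part to be
\[
L_L := L \cup \overline{L_S}.
\]
The intersection is then straightforward: using $L \subseteq L_S$ and $L_S \cap \overline{L_S} = \emptyset$ we get
\[
L_S \cap L_L = (L_S \cap L) \cup (L_S \cap \overline{L_S}) = L \cup \emptyset = L,
\]
which establishes the desired equation $L = L_S \cap L_L$. So only the liveness of $L_L$ remains to be verified.

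To show $L_L$ is liveness I would fix an arbitrary $u \in \Sigma^\ast$ and argue by cases that $u\Sigma^\omega \cap L_L \neq \emptyset$. If $u\Sigma^\omega \not\subseteq L_S$, then $u\Sigma^\omega$ meets $\overline{L_S} \subseteq L_L$ and we are done. Otherwise $u\Sigma^\omega \subseteq L_S$; choosing any $w \in u\Sigma^\omega$ gives $u \preceq w \in L_S$, and by the defining property of $L_S$ every prefix of $w$ — in particular $u$ itself — satisfies $u\Sigma^\omega \cap L \neq \emptyset$, so $u$ extends into $L \subseteq L_L$. In both cases $u\Sigma^\omega \cap L_L \neq \emptyset$, which is precisely the definition of a liveness language.

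The construction is short and most steps are routine; the part that needs genuine care is the liveness of $L_L$, and within it the case $u\Sigma^\omega \subseteq L_S$. The crux is that membership in $L_S$ is decided prefix-by-prefix, so that once a finite word $u$ is the prefix of some word of $L_S$ it is guaranteed to be extendable into the \emph{original} language $L$; this is exactly what prevents the complement $\overline{L_S}$ from leaving any gap that $L_L$ fails to fill.
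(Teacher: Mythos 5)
Your proof is correct and follows essentially the same route as the paper, which cites the classical Alpern--Schneider decomposition: your $L_S$ is exactly the paper's topological closure $cl(L)$ (the smallest safety language containing $L$), and your $L_L = L \cup \overline{L_S}$ is the standard dense (liveness) component. The only difference is presentational: you verify safety of $L_S$ and liveness of $L_L$ directly from the prefix-based definitions instead of invoking the closed/dense correspondence in the Cantor topology, which is a sound unpacking of the same argument.
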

Their proof is based on the observation that safety languages (over some
alphabet $\Sigma$) correspond to \emph{closed} sets in the Cantor
topology over $\Sigma^\omega$ (cf.\ \cite{93442}), and liveness
languages to dense sets.  It follows that co-safety languages correspond
to \emph{open} sets in that topology.  Sets which are both closed and
open, are referred to as clopen.  It is worth pointing out, and easy to
prove, that both $\emptyset$ and $\Sigma^\omega$ are \emph{clopen}.
%
Given a set $L\subseteq\Sigma^\omega$ and element $w \in \Sigma^\omega$,
$w$ is a \emph{limit point} of $L$, if there exists an infinite sequence
of words $w_1, w_2, \ldots$, all of which are in $L$, which converges to
$w$.  Clearly, any $w \in L$ is a limit point of $L$, since $w, w,
\ldots$ converges to $w$.  The \emph{topological closure} of $L$,
written $cl(L)$, is then defined as the set of all limit points of $L$.
Then, obviously, $L \subseteq cl(L)$.
%
The following gives a direct definition of $cl(L)$:
\begin{definition}
  $cl(L) := \{ w \in \Sigma^\omega \mid \forall u \prec w.\ \exists w'
  \in L.\ u \prec w' \}$.
\end{definition}
From a basic result of topology, a topological closure operator on
$\Sigma^\omega$ defines a topology, where a set $L \subseteq
\Sigma^\omega$ is closed (i.e., a safety language) if and only if 
%
$cl(L) \subseteq L$ also holds.
%
%
Moreover, $L$ is \emph{dense} (i.e., a liveness language), if and only
if $cl(L) = \Sigma^\omega$.

This alternative classification of $\omega$-languages proved useful as
many important results from topology transfer to the commonly used
classification in terms of safety and liveness properties.
For example, due to Alpern and Schneider \cite{alpern87recognizing} it
is well-known that the topological closure of a language that is given
by an NBA, $\cA$, where non-reachable and dead-end states have been
eliminated, can be determined by an NBA, $\cA'$, which is like $\cA$
except that all states are made final.  Now, $\cL(\cA)$ gives rise to a
safety language if and only if $\cL(\cA) = \cL(\cA')$ as $\cL(\cA') =
cl(\cL(\cA))$.
We will make use of this and similar results in the remainder. 
For a 
comprehensive overview on this topology, cf.\ 
\cite{alpern87recognizing,93442}.


\section{The monitorability problem of LTL}
\label{sec:ltl}

The results of this section will show that the monitorability problem of
LTL is PSpace-complete.
In order to show this, we will make use of a well-known construction of
a tableau for an LTL formula, which has been given many times before in
the literature (cf.\
\cite{DBLP:journals/jacm/SistlaC85,DBLP:journals/fac/Sistla94}).
For reasons of self-containedness, we briefly summarise its most
important properties for our purposes.

Let us first fix a formula $\varphi \in \LTL$ over some alphabet
$\Sigma$.
$SF(\varphi)$ is the set consisting of the subformulae of $\varphi$ or
the negations of subformulae of $\varphi$.  A set $c \subseteq
SF(\varphi)$ is \emph{complete} if the following two conditions are met:
1.\ Boolean consistency of $c$;
%
2.\ for $\varphi' = \mu \wedge \nu \in SF(\varphi)$, $\varphi' \in c$ if
and only if $\mu \in c$ and $\nu \in c$.
Let $tab(\varphi) = (V, E)$ be a directed graph, where $V$ is the set of
all complete subsets of $SF(\varphi)$, and elements $(c, d) \in E$
defined as follows:
\begin{itemize}
\item for any $\varphi' = \mu \ltlU \nu \in SF(\varphi)$: $\varphi' \in
  c$ if and only if 
  $\nu\in c$, or $\mu\in c$ and $\varphi' \in d$;
\item for any $\varphi' = \ltlX \psi \in SF(\varphi)$: $\varphi' \in
  c$ if and only if $\psi \in d$.
\end{itemize}
Let for any $c\in V$, $\pi(c)$ be the state such that for any atomic
proposition $p \in SF(\varphi)$, $\pi(c)(p) = true$ if and only if $p
\in c$.  An infinite path through $tab(\varphi)$ is called
\emph{accepting} if for every node $c$ on that path with $\varphi' = \mu
\ltlU \nu \in c$, either $\nu \in c$, or there exists a (not necessarily
immediate) successor node $d$, such that $\nu \in d$.
For any $c\in V$, we say that $c$
is a \emph{good node}, if the conjunction of all subformulae in $c$ is
satisfiable; otherwise $c$ is called a \emph{bad node}.
Notably, it holds that for any $w\in\Sigma^\omega$ with the property
$\forall i \geq 0.\ \exists w' \in \Sigma^\omega$ such that $w_0\ldots
w_i w' \models \varphi$, there exists an infinite path $\rho$ of good
nodes in $tab(\varphi)$ starting from a node that contains $\varphi$,
such that $\pi(\rho) = w$.
Moreover for $\varphi, \psi \in \LTL$, we denote by $tab(\varphi) \times
tab(\psi)$ the cross-product of the tableaux for $\varphi$ and $\psi$,
respectively.

\begin{lemma}
  \label{lem:ltl:1}
  Let $\varphi$ not be monitorable.  Then there exists a pair of nodes,
  $(q, q') \in tab(\varphi) \times tab(\neg\varphi)$, reachable on some
  $u \in \Sigma^\ast$ and where $q,q'$ are conjunctions of subformulae
  of $\varphi$, respectively, such that $\cL(q)$ and $\cL(q')$ are
  dense.
\end{lemma}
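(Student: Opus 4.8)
The plan is to work with the ugly prefix and translate non-monitorability into a persistence property that the tableau can detect. Since $\varphi$ is not monitorable, by Proposition~\ref{prop:mon} there is an \emph{ugly} prefix $u_0\in\Sigma^\ast$: for every $v\in\Sigma^\ast$ the word $u_0v$ is undetermined, i.e.\ $u_0v\Sigma^\omega\cap\cL(\varphi)\neq\emptyset$ \emph{and} $u_0v\Sigma^\omega\cap\overline{\cL(\varphi)}\neq\emptyset$. Using Proposition~\ref{prop:prefixes} ($bad(\varphi)=good(\neg\varphi)$ and $bad(\neg\varphi)=good(\varphi)$), this says precisely that no extension of $u_0$ lies in $good(\varphi)\cup bad(\varphi)$, and hence also that $u_0$ is ugly for $\neg\varphi$. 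I record the two consequences I will use: (a) every $u_0v$ extends to a model of $\varphi$, and (b) every $u_0v$ extends to a model of $\neg\varphi$.

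Next I would invoke the good-node property of the tableau. Fix any infinite word $w$ with $u_0\prec w$. Every prefix of $w$ extends to a model of $\varphi$: prefixes shorter than $u_0$ inherit a witness from $u_0$ itself, and longer prefixes are of the form $u_0v$ and use (a). Thus the tableau property yields an infinite path $\rho$ of good nodes in $tab(\varphi)$, starting from a node containing $\varphi$, with $\pi(\rho)=w$; symmetrically, using (b), a good-node path $\rho'$ in $tab(\neg\varphi)$ starting from a node containing $\neg\varphi$ with $\pi(\rho')=w$. Since $tab(\varphi)\times tab(\neg\varphi)$ is finite, the sequence of pairs $(\rho_i,\rho'_i)_{i\geq|u_0|}$ takes some value $(q,q')$ infinitely often; let $j\geq|u_0|$ be its first occurrence and put $u:=w_0\cdots w_{j-1}$. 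Then $(q,q')$ is reachable on $u$ (with $u\supseteq u_0$), and $q,q'$ are good nodes, i.e.\ their conjunctions are satisfiable, matching the shape required by the statement.

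It remains to upgrade ``satisfiable'' to ``dense''. The clean characterisation I would use is: $\cL(q)$ is dense (liveness) iff from the node $q$ every finite word can be read into some good node, equivalently every finite word extends to a model of the conjunction $q$. Since $u$ is again ugly (being an extension of $u_0$), for any $x\in\Sigma^\ast$ the word $ux$ is undetermined and so extends to a model of $\varphi$ by (a); the point is to route this witness through $q$ specifically. This I would do using the recurrence of $(q,q')$: because $q$ reoccurs along the good-node path over $w$, it lies on a good-node loop, and I use the tableau edge conditions together with the persistence of undetermination after $u$ to continue $x$ from $q$ back into a good node. The same argument on the $\neg\varphi$ side, again via (b) and $bad(\varphi)=good(\neg\varphi)$, gives that $\cL(q')$ is dense.

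The hard part will be exactly this last localisation step. The tableau is nondeterministic, so after reading $u$ several good nodes may be reachable, and uglyness only guarantees that the \emph{union} of their residual languages stays nonempty under every continuation (``collective'' density), not that the single node $q$ I pinned down is itself dense. Overcoming this is the crux: I handle it by not fixing $u=u_0$ but instead extending until $(q,q')$ is \emph{recurrent} (on a cycle of the product, equivalently in a bottom strongly connected component of the determinised reachable graph), so that the good-node continuation guaranteed by the tableau property can always be taken back through $q$ and through $q'$. Once density is localised to $q$ and $q'$, the remainder is routine bookkeeping with the definitions of good/bad nodes and the prefix duality of Proposition~\ref{prop:prefixes}.
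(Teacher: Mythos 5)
Your opening (extracting the ugly prefix $u_0$, observing that every extension of it remains undetermined, and transferring the good-prefix condition on $\varphi$ to a no-bad-prefix condition on $\neg\varphi$ via Proposition~\ref{prop:prefixes}) matches the paper's proof, and your use of the good-node path property to obtain infinite paths $\rho$, $\rho'$ over a common $w$ with $u_0 \prec w$ is sound. But the proof does not close, and you say so yourself: localising density to the specific pair $(q,q')$ is the crux, and the recurrence device you propose does not supply it. Recurrence of $(q,q')$ along the one fixed word $w$ only certifies that $q$ and $q'$ admit good-node continuations reading the particular letters of $w$; density of $\cL(q)$ demands that \emph{every} finite word $x$ extend to a model of the conjunction $q$, and nothing routes such an $x$ through $q$. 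Concretely, take $w = u_0\sigma^\omega$ for a single letter $\sigma$: your recurrent pair is then only guaranteed continuations by $\sigma$'s, and its language can fail to be dense even when the lemma's conclusion holds via some \emph{other} pair reachable on the same prefix. ``Persistence of undetermination after $u$'' yields only that the union of the residual languages over all nodes reachable on $ux$ stays nonempty and non-universal---exactly the collective-versus-individual gap you named but did not bridge; sitting on a good-node cycle of the product does not convert collective density into density of the single node, and the aside about bottom strongly connected components of a ``determinised reachable graph'' changes the objects (sets of nodes rather than tableau nodes), so it cannot directly witness what the lemma asserts.

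For comparison, the paper's proof uses no infinite paths, pigeonhole, or recurrence at all: it fixes the ugly prefix $u$ and argues directly from the construction of $tab(\varphi)$ that condition $(i)$, $\forall v.\ uv\Sigma^\omega \cap \cL(\varphi) \neq \emptyset$, forces a node $q$ reachable on $u$ with $\forall v.\ v\Sigma^\omega \cap \cL(q) \neq \emptyset$, whence $\cL(q)$ is dense; condition $(ii)$ is rewritten via Proposition~\ref{prop:prefixes} as $\forall v.\ uv \not\in bad(\neg\varphi)$, and the same argument in $tab(\neg\varphi)$ yields $q'$. So the two arguments diverge precisely at your acknowledged hard step: where the paper asserts the single-node property as a consequence of the tableau construction, you attempt to derive it from recurrence, and that derivation fails. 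A repair along your lines would need to \emph{select} the pair using all continuations rather than taking whichever pair recurs on an arbitrarily chosen $w$---for instance, by arguing that if every pair reachable on some ugly extension had a non-dense component, one could successively append the finite witnesses of non-density and eventually contradict the undetermination of all extensions of $u_0$.
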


\begin{proof}
  Following Proposition~\ref{prop:mon}, the \emph{non-monitorability} of
  $\varphi$ is defined as follows
  $$
  \exists u \in \Sigma^\ast.\ \forall v \in \Sigma^\ast.\
  uv\Sigma^\omega \cap \cL(\varphi) \neq \emptyset \wedge
  uv\Sigma^\omega \not\subseteq \cL(\varphi).
  $$
  In other words, there exists a $u \in \Sigma^\ast$, such that none of
  the finite continuations $v$ of $u$ is $(i)$ a bad or $(ii)$ a good
  prefix of $\varphi$.
  Let us fix such a particular $u$.
  From the construction of $tab(\varphi)$ it follows that in order for
  $(i)$ to be true, there must exist a node $q \in V_\varphi$, reachable
  on $u$ (i.e., $tab(\varphi)$ has a path on $u$), such that $ \forall v
  \in \Sigma^\ast.\ v\Sigma^\omega \cap \cL(q) \neq \emptyset.$ It is
  easy to see that $\cL(q)$ is dense.
  Requirement $(ii)$, i.e., $\forall v \in \Sigma^\ast.\ uv\Sigma^\omega
  \not\subseteq \cL(\varphi)$, is equivalent to 
  $\forall v \in \Sigma^\ast.\ uv \not\in good(\varphi)$, which by
  Proposition~\ref{prop:prefixes} is equivalent to
  \begin{equation}
    \forall v \in \Sigma^\ast.\ uv \not\in bad(\neg\varphi).
    \label{eq:ltl1}
  \end{equation}
  Let $q' \in V_{\neg\varphi}$ be a node in $tab(\neg\varphi)$,
  reached on $u$.
  %
  %
  Now, for (1) to be true, $\forall v \in \Sigma^\ast.\ v \not\in
  bad(q')$ must be true, which is equivalent to $\forall v \in
  \Sigma^\ast.\ v\Sigma^\omega \cap \cL(q') \neq \emptyset$.
  It is easy to see that $\cL(q')$ is dense.
  %
  \qed
\end{proof}

\begin{lemma}
  \label{lem:ltl:2}
  If there exists a pair $(q, q') \in tab(\varphi) \times
  tab(\neg\varphi)$, reachable on some $u \in \Sigma^\ast$, such that
  $\cL(q)$ and $\cL(q')$ are dense, then $\varphi$ is not monitorable.
\end{lemma}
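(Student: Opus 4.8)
The plan is to reverse the reasoning of Lemma~\ref{lem:ltl:1}: from the two dense node-languages I will manufacture, for the fixed prefix $u$ on which $(q,q')$ is reachable, an arbitrary continuation $v$ that is neither good nor bad, thereby exhibiting $u$ as a witness of non-monitorability in the sense of Proposition~\ref{prop:mon}. Concretely, I fix the pair $(q,q')$ and the word $u$, and claim that this very $u$ satisfies $\forall v\in\Sigma^\ast.\ uv\notin good(\varphi)\wedge uv\notin bad(\varphi)$.

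The engine of the argument is a small splicing observation about the tableau, which I would isolate as a sub-claim: if $tab(\varphi)$ has a path on $u$ from an initial node containing $\varphi$ to $q$, and $w''\in\cL(q)$, then $uw''\models\varphi$. To see this, I take the accepting path on $w''$ starting at $q$ furnished by the correctness of the tableau construction, and prepend the finite $u$-path. The only point to verify is that the spliced path is again accepting. Every until-eventuality opened inside the $w''$-part is discharged because $w''$ is a genuine model; every eventuality $\mu\ltlU\nu$ still outstanding somewhere along the $u$-part is, by the transition rule, propagated forward node by node, hence either discharged within $u$ or carried into $q$. In the latter case $\mu\ltlU\nu\in q$, so $w''\models\mu\ltlU\nu$, meaning $\nu$ holds at some suffix of $w''$ and thus appears in a later node of the $w''$-part. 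No eventuality is therefore left unfulfilled, and the path accepts.

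With the sub-claim in hand the bad-prefix direction is immediate. Fix any $v$. Since $\cL(q)$ is dense, $v\Sigma^\omega\cap\cL(q)\neq\emptyset$, so there is $w'\in\Sigma^\omega$ with $vw'\in\cL(q)$; applying the sub-claim to $w''=vw'$ gives $uvw'\models\varphi$, i.e.\ $uv\Sigma^\omega\cap\cL(\varphi)\neq\emptyset$, so $uv\notin bad(\varphi)$. The good-prefix direction is completely symmetric, run in $tab(\neg\varphi)$: density of $\cL(q')$ yields, for the same $v$, a word with $vw''\in\cL(q')$, and the sub-claim applied to $\neg\varphi$ gives $uvw''\models\neg\varphi$, hence $uv\Sigma^\omega\not\subseteq\cL(\varphi)$ and so $uv\notin good(\varphi)$ (passing between $good(\varphi)$ and $bad(\neg\varphi)$ via Proposition~\ref{prop:prefixes}, exactly as in Lemma~\ref{lem:ltl:1}). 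As $v$ was arbitrary, $u$ witnesses non-monitorability by Proposition~\ref{prop:mon}.

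The main obstacle is the sub-claim, specifically the bookkeeping on until-obligations when splicing the finite $u$-path onto the infinite accepting suffix: I must ensure that obligations opened in the $u$-segment are not silently dropped at the junction but are precisely those that reappear in $q$ and are then satisfied by the density-supplied model. This is where the fact that $q,q'$ are good (hence satisfiable) nodes and the precise acceptance condition of $tab(\cdot)$ enter; everything else is a routine unwinding of the definitions of $good$, $bad$, and density.
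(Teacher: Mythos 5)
Your proof is correct and takes essentially the same route as the paper's: the density of $\cL(q)$ and $\cL(q')$ at the pair reached on $u$ is transferred back through the tableau to show that no extension $uv$ is a bad or a good prefix of $\varphi$ (using Proposition~\ref{prop:prefixes} on the $\neg\varphi$ side), so $u$ witnesses non-monitorability exactly as in Proposition~\ref{prop:mon}. The only difference is one of detail, not of substance: where the paper asserts ``by the construction of $tab(\varphi)$'' that $v \not\in bad(\cL(q))$ is equivalent to $uv \not\in bad(\varphi)$, you make the needed direction explicit via the splicing sub-claim, with correct bookkeeping of until-obligations propagated along the $u$-segment into $q$.
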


\begin{proof}
  Let $(q, q') \in tab(\varphi) \times tab(\neg\varphi)$ be reached via
  some $u\in \Sigma^\ast$, such that
  \[
  \forall v \in \Sigma^\ast.\ v \not\in bad(\cL(q)) \wedge v \not\in bad(\cL(q')).
  \]
  Since $q$ is reached on $u$, and by the construction of $tab(\varphi)$
  it follows that $\forall v \in \Sigma^\omega.\ v\not\in bad(\cL(q))$
  is equivalent to $\forall v \in \Sigma^\omega.\ uv \not\in
  bad(\varphi)$ (and, accordingly, for $q'$ and $\neg\varphi$).  Thus,
  together with Proposition~\ref{prop:prefixes}
  we get
  $
  \exists u \in \Sigma^\ast.\ \forall v \in \Sigma^\ast.\ uv \not\in
  bad(\varphi) \wedge uv\not\in good(\varphi),
  $
  which corresponds to the definition of non-monitorability of
  $\varphi$, used in the previous lemma.
  \qed
\end{proof}

\begin{theorem}
  The monitorability problem of LTL is decidable in PSpace.
\end{theorem}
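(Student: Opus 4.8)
The plan is to turn the combinatorial characterisation provided by Lemmas~\ref{lem:ltl:1} and~\ref{lem:ltl:2} into a space-efficient search. Together, these lemmas state that $\varphi$ is \emph{not} monitorable if and only if the product tableau $tab(\varphi)\times tab(\neg\varphi)$ contains a pair of nodes $(q,q')$, reachable from an admissible initial pair on some $u\in\Sigma^\ast$, such that both $\cL(q)$ and $\cL(q')$ are dense. Hence it suffices to decide this existential property in PSpace; the monitorability problem itself then follows, since PSpace is closed under complementation.

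First I would observe that, although $tab(\varphi)$ and $tab(\neg\varphi)$ each have exponentially many nodes, every single node is a complete subset of $SF(\varphi)$ (resp.\ $SF(\neg\varphi)$) and thus admits a representation of size polynomial in $|\varphi|$; consequently a pair $(q,q')$ occupies only polynomial space. Moreover, whether $(c,d)$ is an edge of a tableau is decidable in polynomial time directly from the $\ltlU$- and $\ltlX$-conditions of the construction, and likewise whether $(c_0,c_0')$ is an admissible initial pair (i.e.\ $\varphi\in c_0$, $\neg\varphi\in c_0'$, and $\pi(c_0)=\pi(c_0')$). This lets me decide reachability of a target pair nondeterministically: guess an initial pair, then repeatedly guess a successor in the product while storing only the current pair, and at some nondeterministically chosen step declare the current pair to be $(q,q')$. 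Storing a single configuration at a time keeps this walk within polynomial space.

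It then remains to verify, for the guessed pair, that $\cL(q)$ and $\cL(q')$ are dense. Since $q$ (resp.\ $q'$) is a conjunction of subformulae of $\varphi$ (resp.\ $\neg\varphi$), it denotes an LTL formula of size polynomial in $|\varphi|$, and density of $\cL(q)$ is exactly the statement that this formula specifies a liveness language. By the known result that deciding liveness of an LTL formula is PSpace-complete~\cite{DBLP:journals/corr/cs-LO-0101017}, each of the two density tests runs in space polynomial in $|\varphi|$. Inlining these deterministic PSpace tests into the nondeterministic walk above yields a single nondeterministic procedure that recognises non-monitorability in polynomial space. By Savitch's theorem $\mathrm{NPSpace}=\mathrm{PSpace}$, so non-monitorability, and therefore monitorability, is decidable in PSpace.

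The main obstacle I anticipate is keeping the density test within PSpace measured in $|\varphi|$: the naive route---building the NBA for $q$, making all states final to obtain its closure (as in \cite{alpern87recognizing}), and testing universality---would incur a space cost polynomial in the \emph{exponential} number of automaton states, i.e.\ ExpSpace in $|\varphi|$. Avoiding this blow-up is precisely why I appeal to the tighter PSpace bound for LTL liveness rather than reasoning through the explicit automaton. A secondary point to check carefully is that reachability in the product tableau genuinely corresponds to reachability ``on some common $u$'', which holds because the product is defined to synchronise on the common input symbol, so that $\pi(c)=\pi(c')$ is maintained at every reached pair.
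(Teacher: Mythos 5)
Your proposal is correct and follows essentially the same route as the paper's own proof: both reduce non-monitorability to guessing, step-wise in polynomial space, a reachable pair $(q,q')$ in $tab(\varphi)\times tab(\neg\varphi)$ and then checking density of $\cL(q)$ and $\cL(q')$ via the PSpace liveness test of Ultes-Nitsche and Wolper \cite{DBLP:journals/corr/cs-LO-0101017}, concluding with Savitch's theorem and closure of PSpace under complement. Your added remarks---that single tableau nodes have polynomial-size representations with polynomial-time edge checks, and that one must avoid the exponential detour through explicit NBA closure---merely make explicit details the paper leaves implicit.
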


\begin{proof}
  By Lemma~\ref{lem:ltl:1} and~\ref{lem:ltl:2}, $\varphi$ is not
  monitorable if and only if there exists a word, corresponding to a
  path through $tab(\varphi) \times tab(\neg\varphi)$ that contains a
  pair $(q, q')$, such that $\cL(q)$ and $\cL(q')$ are dense.
  As $tab(\varphi)$ and $tab(\neg\varphi)$ are of exponential size wrt.\
  $|\varphi|$, we cannot construct either explicitly.  Instead, we will
  guess, in a step-wise manner, a path through $tab(\varphi) \times
  tab(\neg\varphi)$ to some pair $(q, q')$, and check if both $\cL(q)$
  and $\cL(q')$ are dense.
  To check whether or not an \LTL formula specifies a dense set is
  equivalent to checking whether or not it specifies a liveness
  language (cf.\ \secref{sec:top}).
  It follows from Ultes-Nitsche and Wolper's work
  \cite{DBLP:journals/corr/cs-LO-0101017} (Remark 4.3 and Theorem 4.6,
  if we replace $L_\omega$ to correspond to $\Sigma^\omega$) that this
  problem can be decided in PSpace.
  So, if the answer to this check is ``yes'', then $\varphi$ is not
  monitorable.

  Since due to Savitch's theorem we know that NPSpace is equal to PSpace
  (cf.\ \cite{cj97-01-15}), we have thus shown that the
  ``non-monitor\-ability problem of LTL'' is in PSpace.
  However, as PSpace is equal to co-PSpace \cite{cj97-01-15}, it follows
  that the complementary problem of that, i.e., the monitorability
  problem of LTL, is, in fact, decidable in PSpace.
  \qed
\end{proof}

\begin{theorem}
  The monitorability problem of LTL is PSpace-complete.
\end{theorem}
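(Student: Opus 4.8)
Since the preceding theorem already places the monitorability problem of LTL in PSpace, the plan is to establish a matching PSpace lower bound by a polynomial-time many-one reduction from LTL satisfiability, which is PSpace-complete \cite{DBLP:journals/jacm/SistlaC85}. Given an arbitrary $\psi \in \LTL$ over $AP$, I would pick a fresh proposition $b \notin AP$ and output
\[
\varphi := \ltlF\psi \wedge \ltlG\ltlF b .
\]
This map is clearly computable in (linear) time, so the whole argument reduces to proving that $\varphi$ is \emph{non}-monitorable if and only if $\psi$ is satisfiable.

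First I would observe that the conjunct $\ltlG\ltlF b$ forces $good(\varphi)=\emptyset$: for any $u\in\Sigma^\ast$ the extension $u\emptyset^\omega$ contains only finitely many occurrences of $b$, hence lies in $u\Sigma^\omega\setminus\cL(\varphi)$, so no $u$ can satisfy $u\Sigma^\omega\subseteq\cL(\varphi)$. Consequently, monitorability of $\varphi$ depends solely on bad prefixes, and by Proposition~\ref{prop:mon} (together with the density view of \secref{sec:top}) $\varphi$ is non-monitorable exactly when some $u\in\Sigma^\ast$ admits no bad extension, i.e.\ when the residual language $\{w \mid uw\in\cL(\varphi)\}$ is dense.

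Next I would show that $\cL(\varphi)$ is dense if and only if $\psi$ is satisfiable, which then decides the whole question already at $u=\varepsilon$. If $\psi$ is unsatisfiable then $\ltlF\psi$, and hence $\varphi$, is unsatisfiable, so $\cL(\varphi)=\emptyset$ is (co-)safety and therefore monitorable by Proposition~\ref{prop:safe}. If $\psi$ is satisfiable, let $m$ be a model; since $b\notin AP$ I can lift $m$ to a word $m'$ over $2^{AP\cup\{b\}}$ containing $b$ at every position, so that $m'\models\psi$ and $m'\models\ltlG\ltlF b$. Then for \emph{any} prefix $u$ the word $um'$ satisfies $\ltlF\psi$ (its suffix at position $|u|$ is $m'\models\psi$) and $\ltlG\ltlF b$, witnessing $um'\in u\Sigma^\omega\cap\cL(\varphi)$; hence $\cL(\varphi)$ is dense and $\varphi$ is non-monitorable. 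Since LTL satisfiability is PSpace-hard and PSpace is closed under complement \cite{cj97-01-15}, this reduction makes both the non-monitorability and the monitorability problem of LTL PSpace-hard, which together with the previous theorem yields PSpace-completeness.

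The step I expect to be the main obstacle, and indeed the crux of the construction, is guaranteeing that density is \emph{robust} against the internal structure of $\psi$. A naive encoding such as $\psi\wedge\ltlG\ltlF b$ fails, because a safety component of $\psi$ (e.g.\ $\ltlG p$) can create bad prefixes and destroy density even when $\psi$ is satisfiable. Wrapping $\psi$ under $\ltlF$ removes this sensitivity: it lets the witness for $\psi$ start arbitrarily late, so that no finite prefix can ever rule it out. The freshness of $b$ is equally essential, as it ensures the liveness conjunct $\ltlG\ltlF b$ never conflicts with $\psi$ while simultaneously eliminating all good prefixes, collapsing the analysis to the one-sided density condition above.
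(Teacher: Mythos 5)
Your proof is correct, and it follows the same overall strategy as the paper: a polynomial-time reduction from LTL satisfiability (hardness then transfers to the complementary problem because PSpace is closed under complement), built from a $\ltlG\ltlF$-style liveness gadget over a fresh atomic proposition that destroys all good prefixes, so that monitorability collapses to a density/liveness question decided already at the empty prefix. The difference lies in the gadget, and it is a difference that matters. The paper uses $\psi := \ltlG a \vee \ltlG\ltlF(a' \wedge \varphi)$ and rests on the claim that, whenever $\cL(\varphi) \neq \emptyset$, the set $\cL(\ltlG\ltlF(a' \wedge \varphi))$ has neither good nor bad prefixes. That claim is not robust against the internal structure of $\varphi$ --- exactly the pitfall you flag in your last paragraph. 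For instance, with $\varphi = p \wedge \ltlX\ltlG\neg p$, which is satisfiable, at most one suffix of any word can satisfy $\varphi$, so $\ltlG\ltlF(a' \wedge \varphi)$ is unsatisfiable, every finite word is a bad prefix of it, the paper's $\psi$ collapses to the monitorable $\ltlG a$, and the claimed equivalence ``$\psi$ monitorable iff $\cL(\varphi) = \emptyset$'' fails. Your construction $\ltlF\psi \wedge \ltlG\ltlF b$ sidesteps this precisely because the recurrence obligation falls only on the fresh atom $b$, while $\psi$ needs to hold on just \emph{one} suffix, which can be placed after any finite prefix: density of $\cL(\ltlF\psi \wedge \ltlG\ltlF b)$ then genuinely holds for every satisfiable $\psi$ (via a model lifted to carry $b$ everywhere), and the $\emptyset^\omega$ extension kills all good prefixes. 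So your argument is not merely an alternative route --- it is a sounder instantiation of the paper's own reduction idea, and your closing discussion of why the naive $\psi \wedge \ltlG\ltlF b$ fails applies, mutatis mutandis, to the gadget the paper actually uses.
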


\begin{proof}
  It is sufficient to show PSpace-hardness.
  We will reduce the PSpace-complete problem of determining whether or
  not a formula $\varphi \in \LTL$ is satisfiable
  \cite{DBLP:journals/jacm/SistlaC85} to the monitorability problem of
  LTL. 
  Let us construct, in constant time, a formula $\psi \in LTL(AP') :=
  \ltlG a \vee \ltlG\ltlF (a' \wedge \varphi)$, where $a\in AP$, $AP' :=
  AP \cup \{a'\}$ and $a' \not\in AP$.  We now claim that $\psi$ is
  monitorable if and only if $\cL(\varphi) = \emptyset$.

  If $\cL(\varphi) = \emptyset$, then $\psi \equiv \ltlG a$, which can
  easily be seen monitorable.

  For the other direction, assume that $\psi$ is monitorable, but that
  $\cL(\varphi) \neq \emptyset$.
  Let $\Sigma' := 2^{AP'}$ and $u \in (2^{AP\backslash\{a\}})^\ast$.  It
  is easy to see that $u\Sigma'^\omega \cap \cL(\ltlG a) = \emptyset$,
  but $u\Sigma'^\omega \cap \cL(\psi) \neq \emptyset$.  Hence, for
  $\psi$ to be monitorable, $u$ has to be extensible with some
  $v\in\Sigma'^\ast$, such that either $uv\Sigma'^\omega \cap
  \cL(\ltlG\ltlF (a' \wedge \varphi)) = \emptyset$, or such that
  $uv\Sigma'^\omega \subseteq \cL(\ltlG\ltlF (a' \wedge \varphi))$.
  Now, observe that irrespective of our choice of $\varphi$ (including
  the case $\varphi = true$), so long as $\cL(\varphi) \neq \emptyset$,
  the set $\cL(\ltlG\ltlF (a' \wedge \varphi))$ neither has a bad nor a
  good prefix.
  %
  %
  This means that $\exists u \in \Sigma'^\ast.\ \forall v \in
  \Sigma'^\ast.\ uv\Sigma'^\omega \cap \cL(\psi) \neq \emptyset \wedge
  uv\Sigma'^\omega \not\subseteq \cL(\psi)$; that is, $\psi$ is not
  monitorable.  Contradiction.  \qed

\end{proof}


\section{The monitorability problem of \buchi automata}
\label{sec:nba}


Let for the rest of this section $\cA = (\Sigma, Q, Q_0, \delta, F)$ be
a fixed NBA with $\cL(\cA) \subseteq \Sigma^\omega$.  As pointed out in
\secref{sec:top}, a topologically closed set can be obtained from
$\cL(\cA)$ via an automaton, referred to as $safe(\cA)$, whose accepted
$\omega$-language will always be closed, irrespective of $\cL(\cA)$.
What is more, $safe(\cA)$ can be constructed in polynomial time wrt.\
the size of $\cA$.
Moreover for the next results, we also need an explicit representation
of $live(\cA)$, whose accepted $\omega$-language will always be dense.
Like its counterpart $safe(\cA)$, it can be constructed in polynomial
time wrt.\ the size of $\cA$.  For details on these constructions, see
\cite[Sec.~4]{alpern87recognizing}.
Finally, we need to introduce the notion of a \emph{tight automaton}, as
a finite-state acceptor for good prefixes, as follows.


\subsection{Tight automata}

In preparation for the main results of this section, let us discuss how
to obtain a \emph{tight automaton} over $\Sigma^\ast$ that, given some
NBA $\cA$, accepts $good(\cA)$.
%
%
The construction can be described by a two-stage process.

First, we construct a nondeterministic finite automaton (NFA) that
accepts the \emph{potentially} good prefixes of $\cA$, where, without
loss of generality, all unreachable and dead-end states have been
eliminated.
From $\cA$, we can easily derive the NFA $\cG^p_\cA = (\Sigma, Q, Q_0,
\delta, F)$, where $F := Q$ is the set of accepting states, and the rest
defined as for $\cA$.  For this NFA it holds that
\begin{proposition}
  $\cL(\cG^p_\cA) = 
  \{ u \in \Sigma^\ast \mid \exists w \in \Sigma^\ast.\ uw \in \cL(\cA) \}$.
\end{proposition}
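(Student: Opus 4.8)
The plan is to prove the two inclusions separately, after one preliminary observation about trimness. Because $\cG^p_\cA$ is built from an $\cA$ in which all non-reachable and dead-end states have already been removed, every state $q$ of $\cG^p_\cA$ is reachable from $Q_0$ by a finite word and is, moreover, the source of at least one infinite run of $\cA$ visiting $F$ infinitely often (otherwise $q$ would be a dead-end and would have been eliminated). I would state this remark up front, as it drives the $\subseteq$ direction. I would also stress that $\cG^p_\cA$ shares $Q$, $Q_0$, and $\delta$ with $\cA$ and differs only in being read as an NFA with $F=Q$, so that a run of $\cG^p_\cA$ is accepting exactly when it exists as a finite path from $Q_0$. (Here I read the right-hand side with $w \in \Sigma^\omega$, forced by $\cL(\cA)\subseteq\Sigma^\omega$, so that $uw$ is a genuine infinite word; this matches the informal description of $u$ as a \emph{potentially good} prefix, i.e.\ one with $u\Sigma^\omega \cap \cL(\cA)\neq\emptyset$.)

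For the inclusion $\supseteq$, I would take $u$ in the right-hand set, fix $w\in\Sigma^\omega$ with $uw\in\cL(\cA)$, and fix an accepting run of $\cA$ on $uw$. Its initial segment, read while consuming the letters of $u$, is a finite path in $\cA$ from a state of $Q_0$ to some state $q$. Since $\cG^p_\cA$ has the same states and transitions and all its states are final, this same path is an accepting NFA run on $u$, so $u \in \cL(\cG^p_\cA)$.

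For $\subseteq$ I would reverse the construction. Given $u \in \cL(\cG^p_\cA)$, there is a finite path in $\cG^p_\cA$ from $Q_0$ to some state $q$ spelling $u$, and the identical path lives in $\cA$. By the trimness remark, $q$ is not a dead-end, so $\cA$ admits an infinite accepting run from $q$ reading some $w\in\Sigma^\omega$. Splicing this onto the finite path yields an accepting run of $\cA$ on $uw$, whence $uw\in\cL(\cA)$ and $u$ belongs to the right-hand set.

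I expect no real obstacle, since both directions amount to bookkeeping on a single run. The only delicate point---and the sole place the statement could fail---is the reading of ``dead-end'': the argument needs it to mean ``a state from which no infinite accepting run departs'' (equivalently, a state not co-reachable to the \buchi acceptance condition), rather than merely ``a state with no outgoing edge.'' This is precisely the trimming already invoked in \secref{sec:top} for the closure automaton $safe(\cA)$, so I would appeal to that convention instead of re-deriving it.
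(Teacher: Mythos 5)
Your proof is correct and follows essentially the same route as the paper's own: both directions rest on the shared transition relation $\delta$ (lifting the finite segment of an accepting run of $\cA$ into $\cG^p_\cA$, and conversely splicing an infinite accepting continuation from the reached state $q$, which exists because dead-end states were eliminated, i.e.\ $\cL(\cA(q))\neq\emptyset$). Your two side remarks are also on target: the statement's $w\in\Sigma^\ast$ must indeed be read as $w\in\Sigma^\omega$ (the paper's proof does exactly that), and ``dead-end'' must mean ``not co-reachable to the \buchi acceptance condition,'' which is the trimming convention the paper invokes.
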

From this point forward, let as a notational convention, $\cA(q)$ be
like $\cA$, except that $Q_0 = \{ q \}$.
\begin{proof}
  ($\subseteq$):
  Take any $u \in \cL(\cG^p_\cA)$.  Obviously, there exists an accepting
  run on $u$ in $\cG^p_\cA$ to some state $q \in Q$ and by construction
  also a (finite) run in $\cA$ reaching the same $q$ as both automata
  share the same $\delta$.
  As by assumption $\cA$ is non-empty, and all unreachable and dead-end
  states have been eliminated, it then follows that $\cL(\cA(q)) \neq
  \emptyset$, i.e., $\exists w \in \Sigma^\omega.\ w \in \cL(\cA(q))$.
  Moreover, as $q$ was reached on $u$, it then follows that $uw \in
  \cL(\cA)$.

  ($\supseteq$):
  Let $w \in \Sigma^\omega$ be such that there exists an accepting run
  in $\cA$, i.e., $w \in \cL(\cA)$.  As $\delta$ is the same for both
  automata, it follows that there also exists a run on the state space
  of $\cG^p_\cA$, in a sense that for each symbol in $w$ there always
  exists a successor state in $\cG^p_\cA$.  Now, pick any $u \prec w$,
  then $u \in \cL(\cG^p_\cA)$ as all states in $\cG^p_\cA$ are
  accepting.
  \qed
\end{proof}
%

Second, to obtain an NFA that contains \emph{only} the good prefixes,
but no other words, we proceed as follows.
As $\cG^p_\cA$ is but an ordinary NFA, we can apply the standard subset
construction to obtain a deterministic finite automaton (DFA) accepting
the same language, and whose states consist of a subset of states of
$\cG^p_\cA$, respectively.
Let $\cG_\cA = (\Sigma, Q', \langle Q_0 \rangle, \delta', F')$ be this
DFA, defined as expected, except that we set the accepting states to be
\[
F' := \{ \langle q_0, \ldots, q_n \rangle \in Q' \mid \cL(\cA(q_0)) \cup
\ldots \cup \cL(\cA(q_n)) = \Sigma^\omega \}.
\]
Note that as a notational convention we let $\langle q_0, \ldots, q_n
\rangle$ be the single DFA-state whose label is made up of the
individual state labels $q_0, \ldots, q_n$ of $\cG^p_\cA$.

\begin{proposition}
  \label{prop:nba:tight}
  $\cL(\cG_\cA) = good(\cA)$.
\end{proposition}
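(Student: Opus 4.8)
The plan is to unfold both sides of the claimed equality into a statement about the single DFA-state that $u$ reaches in $\cG_\cA$, and then to connect them through a decomposition of accepting runs of $\cA$ at the position immediately after $u$.

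First I would record what the subset construction computes. Since $\cG_\cA$ is obtained from $\cG^p_\cA$ by the standard subset construction, and $\cG^p_\cA$ shares its transition relation $\delta$ (and initial states $Q_0$) with $\cA$, the state of $\cG_\cA$ reached on input $u \in \Sigma^\ast$ is exactly $\langle R_u \rangle$, where $R_u := \{ q \in Q \mid q \text{ is reachable in } \cA \text{ from } Q_0 \text{ on reading } u \}$. By the definition of the accepting set $F'$, we therefore have $u \in \cL(\cG_\cA)$ if and only if $\bigcup_{q \in R_u} \cL(\cA(q)) = \Sigma^\omega$. (The empty case $R_u = \emptyset$, i.e.\ no run on $u$, is handled uniformly: the empty union is not $\Sigma^\omega$, matching the fact that $u\Sigma^\omega \cap \cL(\cA) = \emptyset$ forbids $u \in good(\cA)$.)

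The key step is a decomposition lemma: for every $u \in \Sigma^\ast$ and every $w \in \Sigma^\omega$, $uw \in \cL(\cA)$ if and only if $w \in \bigcup_{q \in R_u} \cL(\cA(q))$. For the forward direction, an accepting run of $\cA$ on $uw$ passes, right after consuming the finite prefix $u$, through some state $q$, which by definition lies in $R_u$; the remainder of the run is then an accepting run of $\cA(q)$ on $w$, whence $w \in \cL(\cA(q))$. For the converse, given $q \in R_u$ and $w \in \cL(\cA(q))$, I would concatenate a finite run of $\cA$ on $u$ ending in $q$ with an accepting run of $\cA(q)$ on $w$; the result is a run of $\cA$ on $uw$, and since the \buchi condition depends only on the states visited infinitely often---all of which occur in the infinite suffix---the glued run is accepting, so $uw \in \cL(\cA)$. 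Combining this with the previous step closes the argument: $u \in good(\cA)$ means $u\Sigma^\omega \subseteq \cL(\cA)$, i.e.\ $uw \in \cL(\cA)$ for \emph{every} $w \in \Sigma^\omega$, which by the decomposition lemma is equivalent to $\bigcup_{q \in R_u}\cL(\cA(q)) = \Sigma^\omega$, precisely the acceptance condition of $\langle R_u \rangle$ in $\cG_\cA$.

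I expect the only genuine subtlety to be the converse (gluing) direction of the decomposition lemma, since it is where one must invoke that \buchi acceptance is determined solely by the infinite suffix, so that prepending an arbitrary finite run on $u$ preserves acceptance; everything else (correctness of the subset construction, the translation between $u\Sigma^\omega \subseteq \cL(\cA)$ and a universal quantifier over $w$, and the reduction of that quantifier to the union being all of $\Sigma^\omega$) is routine. I would also remark that the standing assumption that unreachable and dead-end states have been removed is not needed for this proposition, but does no harm.
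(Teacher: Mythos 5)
Your proof is correct and follows essentially the same route as the paper's: both arguments rest on the subset construction sending $u$ to the macro-state collecting exactly the states of $\cA$ reachable on $u$, combined with the run-decomposition fact that $uw \in \cL(\cA)$ if and only if $w \in \cL(\cA(q))$ for some such $q$ --- the paper proves the two inclusions separately and uses this fact implicitly (gluing for $\subseteq$, the contrapositive of the splitting direction for $\supseteq$), while you state it as an explicit lemma and chain biconditionals. Your side remarks are accurate as well: the empty macro-state is correctly non-accepting, and the elimination of dead-end states is indeed needed only for the preceding proposition about $\cG^p_\cA$ (on which the paper's proof, unlike yours, relies), not for this one.
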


\begin{proof}
  ($\subseteq$):
  Take any $u \in \cL(\cG_\cA)$.
  By the subset construction and the fact that all states in
  $\cL(\cG^p_\cA)$ are accepting, it follows that $u \in \cL(\cG^p_\cA)$
  must hold.  Hence, $u$ is a potentially good prefix of $\cL(\cA)$.
  Now, recall that for $u \in \cL(\cG_\cA)$ to hold, $\cG_\cA$ must be
  in some state $\langle q_0, \ldots, q_n \rangle$ such that
  $\cL(\cA(q_0)) \cup \ldots \cup \cL(\cA(q_n)) = \Sigma^\omega$ holds.
  Moreover, by the construction of $\cL(\cG^p_\cA)$, we know that there
  exist $n+1$ runs in $\cA$ on $u$ to the individual states $q_0,
  \ldots, q_n$, i.e., each of these state can be reached on $u$.  Now,
  if the union of these states' individual languages corresponds to the
  universal language, $\Sigma^\omega$, then clearly $u \in good(\cA)$.

  ($\supseteq$):
  Take any $u \in good(\cA)$.  By the previous proposition and the
  construction of $\cG^p_\cA$, there exist runs on $u$ in $\cG^p_\cA$ to
  states $q_0, \ldots, q_n$, each of which is accepting, i.e., there is
  at least one such run.  Moreover from the construction of $\cG_\cA$,
  whose state graph corresponds to the deterministic variant of
  $\cG^p_\cA$, it follows that there has to be a state $\langle q_0,
  \ldots, q_n \rangle \in Q'$ which can be reached on $u$.  We now have
  to show that $\langle q_0, \ldots, q_n \rangle$ is an accepting state
  of $\cG_\cA$.  For assume not, i.e., $\cL(\cA(q_0)) \cup \ldots \cup
  \cL(\cA(q_n)) \neq \Sigma^\omega$ holds, then there exists a word $w
  \in \Sigma^\omega$, 
  such that $w \not\in \cL(\cA(q_0)) \cup \ldots \cup \cL(\cA(q_n))$,
  and consequently $uw \not\in \cL(\cA)$.  Clearly, then $u \not\in
  good(\cL(\cA))$.  Contradiction.
  \qed
\end{proof}

\begin{remark}
  \label{rem:tight}
  In \cite{DBLP:conf/atva/KupfermanL06}, Kupferman and Lampert discuss
  properties of an NFA, referred to as a ``tight automaton,'' that
  accepts all the good prefixes of some NBA, $\cA$.
  The name stems from the fact that their paper is more concerned with
  the construction of so called ``fine automata,'' which accept only
  \emph{some} good prefixes, but not all. 
  %
  Although they do not explicitly give details on how to obtain a tight
  automaton, and only consider the special case where the NBA describes
  a co-safety language, they conclude, using a language-theoretic
  argument, that such an automaton must, in the worst-case, be of
  exponential size wrt.\ $\cA$---which agrees with our procedure above.
  %
  Their restriction to only examine NBAs which describe co-safety
  languages seems motivated solely by their application of model
  checking (co-) safety languages.  Consequently, they discuss how to
  obtain tight automata for NBAs describing safety languages, then
  accepting all the bad prefixes, and tight automata for for NBAs
  describing co-safety languages, then accepting all the good prefixes.
  However, it is easy to see that the constructions outlined on an
  abstract level by Kupferman and Lampert easily transfer to general
  NBAs, and result in the above described procedure when an acceptor for
  good prefixes is needed.  Hence, $\cG_\cA$ can be considered as a
  general form of a tight automaton capturing good prefixes, regardless
  as to whether $\cA$ describes a co-safety language, or not.
\end{remark}


\subsection{Deciding monitorability---The general case}

Now that we have all the required tools at hand, let us continue to
prove this section's main result, namely the complexity of the
monitorability problem of \buchi automata.  We will do this by way of
the following lemmas, which provide sufficient and necessary conditions
for deciding the monitorability of a language defined by some NBA.

\begin{lemma}
  \label{lem:nba:2}
  If $\forall u \in good(safe(\cA)).\ \exists v \in \Sigma^\ast.\
  uv\Sigma^\omega \subseteq \cL(live(\cA))$, then $\cA$ is monitorable.
\end{lemma}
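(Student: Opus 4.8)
The plan is to fix an arbitrary $u \in \Sigma^\ast$ and produce a finite extension $v$ witnessing $uv \in good(\cA)$ or $uv \in bad(\cA)$, organising the argument around whether $u$ is a good prefix of the safety over-approximation $safe(\cA)$. Before the case split I would record three facts to be treated as given. First, $\cL(safe(\cA)) = cl(\cL(\cA))$ is a closed (safety) language containing $\cL(\cA)$, so in particular $\cL(\cA) \subseteq \cL(safe(\cA))$. Second, by the Alpern--Schneider decomposition (Proposition~\ref{prop:dt}, together with the constructions of $safe(\cA)$ and $live(\cA)$, cf.\ \cite{alpern87recognizing}) we have $\cL(\cA) = \cL(safe(\cA)) \cap \cL(live(\cA))$. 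Third, the precise structural form of monitorability for safety languages (Proposition~\ref{prop:safe}): for the safety language $\cL(safe(\cA))$ and any $u$, either $u \in good(safe(\cA))$, or $u$ extends to a bad prefix of $safe(\cA)$. I would justify this inline, as generic monitorability is not quite enough: if $u\Sigma^\omega \not\subseteq \cL(safe(\cA))$, then some $w \in u\Sigma^\omega$ lies outside the closed set $\cL(safe(\cA))$, and closedness supplies a bad prefix $u' \prec w$, which (being comparable to $u$ along $w$) either already refutes $u$ or is of the form $uv$.

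With these in hand, fix $u$ and split. In the first case, suppose $u$ extends to some $uv \in bad(safe(\cA))$, i.e.\ $uv\Sigma^\omega \cap \cL(safe(\cA)) = \emptyset$. Since $\cL(\cA) \subseteq \cL(safe(\cA))$, a fortiori $uv\Sigma^\omega \cap \cL(\cA) = \emptyset$, so $uv \in bad(\cA)$ and we are done for this $u$. This direction uses only the trivial inclusion $\cL(\cA) \subseteq \cL(safe(\cA))$ and no finer topological analysis.

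In the remaining case $u \in good(safe(\cA))$, so $u\Sigma^\omega \subseteq \cL(safe(\cA))$. This is exactly the index set of the hypothesis, which therefore yields a $v \in \Sigma^\ast$ with $uv\Sigma^\omega \subseteq \cL(live(\cA))$. Combining this with $uv\Sigma^\omega \subseteq u\Sigma^\omega \subseteq \cL(safe(\cA))$ gives $uv\Sigma^\omega \subseteq \cL(safe(\cA)) \cap \cL(live(\cA)) = \cL(\cA)$ by the decomposition, whence $uv \in good(\cA)$. In both cases $u$ extends to a good or bad prefix of $\cL(\cA)$; as $u$ was arbitrary, $\cA$ is monitorable by Proposition~\ref{prop:mon}.

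The argument is short, so the work is conceptual rather than computational. The step I would be most careful about is the alignment of the hypothesis's quantifier range $good(safe(\cA))$ with the case split: it is precisely the monitorability of the safety part that guarantees every $u$ falls into exactly the two handled cases (a good prefix of $safe(\cA)$, or extendable to a bad one), with no leftover prefix that can only reach good. I would also state cleanly that $\cL(live(\cA))$ enters only through the inclusion $\cL(safe(\cA)) \cap \cL(live(\cA)) \subseteq \cL(\cA)$; the reverse inclusion of the decomposition is not needed here. The genuinely harder, necessity direction of this characterisation I would expect to be carried by the companion lemma, not by this construction.
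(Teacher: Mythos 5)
Your proof is correct and takes essentially the same route as the paper's: the identical case split on whether $u \in good(safe(\cA))$, using the hypothesis together with the inclusion $\cL(safe(\cA)) \cap \cL(live(\cA)) \subseteq \cL(\cA)$ in the first case, and closedness of $\cL(safe(\cA))$ plus $\cL(\cA) \subseteq \cL(safe(\cA))$ in the second. You are in fact slightly more explicit than the paper at the one delicate point---why $u \notin good(safe(\cA))$ must extend to a bad prefix of the closed set, with the comparability of prefixes along a witness $w$---which the paper dispatches simply with ``by the definition of a closed set.''
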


\begin{proof}
  For any $u \in good(safe(\cA))$ there does not exist a
  $v\in\Sigma^\ast$, such that $uv\Sigma^\omega \cap \cL(\cA) =
  \emptyset$ as $u$ is a good prefix of $\cL(safe(\cA))$, and
  $\cL(live(\cA))$ does not, by definition of $live(\cA)$, have any bad
  prefixes.  Now, if the assumption $\forall u \in
  good(safe(\cA)).\ \exists v \in \Sigma^\ast.\ uv\Sigma^\omega
  \subseteq \cL(live(\cA))$ holds, then any such $u$ is extensible to to
  be a good prefix of $\cL(live(\cA))$ and thus $\cL(\cA)$.

  On the other hand, if $u \not\in good(safe(\cA))$, then by the
  definition of a closed set, this $u$ is extensible to be a bad prefix
  of $\cL(safe(\cA))$ and thus $\cL(\cA)$.

  As for any $u \in \Sigma^\ast$, either $u \in good(safe(\cA))$, or
  not, any $u$ can be extended to be either a good or a bad prefix of
  $\cL(\cA)$ under the lemma's assumption.  \qed
\end{proof}

\begin{corollary}
  \label{lem:nba:1}
  If there exists no good prefix of $\cL(safe(\cA))$, then $\cA$ is
  monitorable.
\end{corollary}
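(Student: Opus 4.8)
The plan is to obtain this as an immediate consequence of Lemma~\ref{lem:nba:2}. The hypothesis says precisely that $good(safe(\cA)) = \emptyset$, i.e.\ there is no good prefix of $\cL(safe(\cA))$. I would first observe that under this assumption the universally quantified premise of Lemma~\ref{lem:nba:2}, namely $\forall u \in good(safe(\cA)).\ \exists v \in \Sigma^\ast.\ uv\Sigma^\omega \subseteq \cL(live(\cA))$, ranges over an empty domain and is therefore vacuously satisfied. Invoking Lemma~\ref{lem:nba:2} then directly yields that $\cA$ is monitorable, and no further work is required.

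For a self-contained argument that does not route through the vacuous quantifier, I would instead reproduce the second half of that lemma's reasoning. Since, as recalled in \secref{sec:top}, $safe(\cA)$ recognises the topological closure $cl(\cL(\cA))$, we have $\cL(\cA) \subseteq \cL(safe(\cA))$; consequently any bad prefix of $\cL(safe(\cA))$ is also a bad prefix of $\cL(\cA)$, because $u\Sigma^\omega \cap \cL(safe(\cA)) = \emptyset$ forces $u\Sigma^\omega \cap \cL(\cA) = \emptyset$. Because $\cL(safe(\cA))$ is closed and, by hypothesis, has no good prefix, every $u \in \Sigma^\ast$ lies outside $good(safe(\cA))$; by the defining property of a closed (safety) language, each such $u$ can be extended by some $v \in \Sigma^\ast$ to a bad prefix of $\cL(safe(\cA))$, and hence of $\cL(\cA)$. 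Thus every finite word extends to a member of $bad(\cA)$, so Proposition~\ref{prop:mon} is met and $\cA$ is monitorable.

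There is essentially no obstacle here, and I would flag that honestly: the statement is a degenerate special case of Lemma~\ref{lem:nba:2} in which there are no good prefixes to extend, so monitorability is witnessed \emph{entirely} by bad prefixes. The only points to keep straight are the inclusion $\cL(\cA) \subseteq \cL(safe(\cA))$ that transfers bad prefixes downward from the closure to the original language, and the fact that a closed set has the property that every non-good prefix extends to a bad one. Both are already established in \secref{sec:top}, so the corollary follows with no additional machinery.
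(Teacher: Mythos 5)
Your proof is correct and takes the same route the paper intends: the corollary is stated immediately after Lemma~\ref{lem:nba:2} precisely because its hypothesis $good(safe(\cA)) = \emptyset$ makes the lemma's universally quantified premise vacuously true. Your supplementary self-contained argument simply unpacks the second half of that lemma's own proof (non-good prefixes of the closed set $\cL(safe(\cA))$ extend to bad prefixes, which transfer to $\cL(\cA)$ since $\cL(\cA) \subseteq \cL(safe(\cA))$), so it is a sound but inessential addition.
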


\begin{lemma}
  \label{lem:nba:3}
  Let $\cA$ be monitorable, then 
  $\forall u \in good(safe(\cA)).\ \exists v \in \Sigma^\ast.\
  uv\Sigma^\omega \subseteq \cL(live(\cA))$.
\end{lemma}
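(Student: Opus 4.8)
The plan is to turn the monitorability hypothesis into a case distinction and to eliminate the unwanted case using the defining property of $live(\cA)$. The two facts I would rely on are, first, the decomposition $\cL(\cA) = \cL(safe(\cA)) \cap \cL(live(\cA))$, which comes from Proposition~\ref{prop:dt} once we identify $\cL(safe(\cA))$ as the closed (safety) factor and $\cL(live(\cA))$ as the dense (liveness) factor; and second, that a dense language has no bad prefixes, so $bad(live(\cA)) = \emptyset$.

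First I would fix an arbitrary $u \in good(safe(\cA))$, which by definition means $u\Sigma^\omega \subseteq \cL(safe(\cA))$. Since trivially $uv\Sigma^\omega \subseteq u\Sigma^\omega$ for every $v \in \Sigma^\ast$, this gives $uv\Sigma^\omega \subseteq \cL(safe(\cA))$ no matter how $u$ is extended. Applying the monitorability of $\cA$ to this particular $u$ then produces a $v \in \Sigma^\ast$ with either $uv \in good(\cA)$ or $uv \in bad(\cA)$.

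The core of the argument is to rule out $uv \in bad(\cA)$. Assuming it, I would use the decomposition together with $uv\Sigma^\omega \subseteq \cL(safe(\cA))$ to rewrite $uv\Sigma^\omega \cap \cL(\cA) = uv\Sigma^\omega \cap \cL(safe(\cA)) \cap \cL(live(\cA)) = uv\Sigma^\omega \cap \cL(live(\cA))$. Emptiness of the left-hand side, which is exactly what $uv \in bad(\cA)$ asserts, therefore forces $uv\Sigma^\omega \cap \cL(live(\cA)) = \emptyset$, i.e.\ $uv \in bad(live(\cA))$---contradicting $bad(live(\cA)) = \emptyset$. Consequently only $uv \in good(\cA)$ survives, and from $uv\Sigma^\omega \subseteq \cL(\cA) \subseteq \cL(live(\cA))$ we read off the desired extension $v$.

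The step I expect to need the most care is the collapse of the intersection in the bad case: the whole argument hinges on $u$ being a good prefix of the safety language, since that is precisely what lets emptiness against $\cL(\cA)$ reduce to emptiness against $\cL(live(\cA))$. Once this identity is established the contradiction with density of $live(\cA)$ is immediate, and the favourable case needs no extra work, so no genuinely hard estimate remains.
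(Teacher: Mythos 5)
Your proof is correct and is in substance the same argument as the paper's: the paper establishes the contrapositive (a $u \in good(safe(\cA))$ with no extension in $good(live(\cA))$ is an ugly prefix, so $\cA$ is not monitorable), while you run the identical reasoning forwards, using the decomposition $\cL(\cA) = \cL(safe(\cA)) \cap \cL(live(\cA))$ and density of $\cL(live(\cA))$ to eliminate the bad-prefix branch of the monitorability case split. If anything, your explicit collapse $uv\Sigma^\omega \cap \cL(\cA) = uv\Sigma^\omega \cap \cL(live(\cA))$ for $u \in good(safe(\cA))$ makes precise a step the paper's proof of this lemma leaves implicit (it is stated only in the proof of Lemma~\ref{lem:nba:2}), so there is no gap.
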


\begin{proof}
  We are going to show the contrapositive of the lemma's statement; that
  is, $\exists u \in good(safe(\cA)).\ \forall v \in \Sigma^\ast.\ uv
  \not\in good(live(\cA))$ implies that $\cA$ is \emph{not} monitorable.
  Let us now fix such a prefix $u$.
  Since $u \in good(safe(\cA))$, for $\cA$ to be monitorable after $u$,
  there would have to exist some $v \in \Sigma^\ast$, such that $uv \in
  good(live(\cA))$, thus $uv \in good(safe(\cA) \cap live(\cA))$, and
  therefore $uv \in good(\cA)$.
  However, by assumption this is not possible.
  Hence, $u$ is an ``ugly prefix'' of $\cL(\cA)$ and, consequently,
  $\cA$ not monitorable.
  \qed
\end{proof}


\begin{theorem}
  The monitorability problem of \buchi automata is decidable in PSpace.
\end{theorem}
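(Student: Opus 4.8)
The plan is to lean entirely on Lemmas~\ref{lem:nba:2} and~\ref{lem:nba:3}, which together characterise monitorability: $\cA$ is monitorable if and only if
\[
\forall u \in good(safe(\cA)).\ \exists v \in \Sigma^\ast.\ uv \in good(live(\cA)).
\]
So it suffices to decide this condition in PSpace; since PSpace equals co-PSpace, I will instead decide its negation, \emph{non}-monitorability, i.e.\ the existence of an ``ugly'' prefix $u \in good(safe(\cA))$ with $\forall v \in \Sigma^\ast.\ uv \notin good(live(\cA))$. (The degenerate case $good(safe(\cA)) = \emptyset$, covered by Corollary~\ref{lem:nba:1}, then makes the universal condition vacuously true and is handled uniformly.)

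The key tool is the tight-automaton construction of Proposition~\ref{prop:nba:tight}, applied to the polynomial-size automata $safe(\cA)$ and $live(\cA)$, giving DFAs $\cG_{safe(\cA)}$ and $\cG_{live(\cA)}$ accepting $good(safe(\cA))$ and $good(live(\cA))$, respectively. Although both DFAs are of exponential size, I would never build them explicitly: each of their states is a \emph{subset} of the polynomially many states of $safe(\cA)$, resp.\ $live(\cA)$, hence representable in polynomial space; a single subset transition is computable in polynomial time; and by the accepting condition $F'$ of the construction, testing whether a state $\langle q_0, \ldots, q_n \rangle$ is accepting reduces to checking $\cL(\cA(q_0)) \cup \ldots \cup \cL(\cA(q_n)) = \Sigma^\omega$, i.e.\ universality of an NBA, which is decidable in PSpace.

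I would then recast non-monitorability as a reachability question on the product $\cG_{safe(\cA)} \times \cG_{live(\cA)}$ (both read the same word $u$, so the product is natural). An ugly prefix exists if and only if some product state $(S, T)$ is reachable such that $(a)$ $S$ is accepting in $\cG_{safe(\cA)}$, so $u \in good(safe(\cA))$, and $(b)$ $T$ is a \emph{trap} in $\cG_{live(\cA)}$, i.e.\ no accepting state is reachable from $T$, so no extension $v$ makes $uv \in good(live(\cA))$. The outer procedure guesses $u$ one symbol at a time, keeping only the current product state $(S,T)$ in polynomial space, and upon nondeterministically halting verifies $(a)$ by the universality test above and $(b)$ by a subroutine $\mathit{IsTrap}(T)$.

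The step needing real care is $\mathit{IsTrap}$, since it is itself a reachability problem over the exponential state space of $\cG_{live(\cA)}$ whose target predicate (``this subset is accepting'') is only PSpace-checkable rather than trivial. Here the argument is that its complement, ``some accepting state is reachable from $T$,'' lies in NPSpace: guess a path out of $T$ one subset transition at a time, storing only the current subset, and run the PSpace universality test at each visited state; a witnessing path, if any, has length bounded by the (exponential) number of states, so the configurations stay polynomial. By Savitch's theorem this is in PSpace, whence $\mathit{IsTrap}$ is in co-PSpace $=$ PSpace. The outer guess of $u$ combined with the PSpace subroutines $(a)$ and $(b)$ is again an NPSpace computation with polynomially bounded configurations, hence in PSpace. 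Thus non-monitorability is in PSpace, and therefore so is monitorability. The only genuinely delicate point is confirming that these nested space-bounded computations compose without blow-up, which holds because every level stores only polynomial-size data — subsets of a polynomial state set together with the workspace of a single universality check — and Savitch is invoked at polynomial depth only.
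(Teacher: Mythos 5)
Your proof is correct, and it shares the paper's overall skeleton: both arguments reduce the problem to the characterisation given by Lemmas~\ref{lem:nba:2} and~\ref{lem:nba:3}, decide the complementary (non-monitorability) condition by guessing a witness prefix $u$ symbol by symbol through a product whose first component is the implicit subset construction underlying the tight automaton for $good(safe(\cA))$ (Proposition~\ref{prop:nba:tight}), verify acceptance there by a PSpace NBA-universality test, and close with NPSpace $=$ PSpace $=$ co-PSpace. Where you genuinely diverge is the treatment of the live side. The paper pairs $\cG$ with $live(\cA)$ itself, tracking a single nondeterministically chosen state $q'$, and decides ``no extension $v$ yields $uv \in good(live(\cA))$'' by checking that no state $p$ with $\cL(live(\cA)(p))$ open is reachable from $q'$, delegating that test to the PSpace algorithm of \cite{bauer:pinchinat:deds09}. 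You instead determinise $live(\cA)$ implicitly as well, via a second application of Proposition~\ref{prop:nba:tight}, and decide the same condition as a trap test on the subset automaton: no accepting subset state is reachable from $T$, established by a nested co-NPSpace reachability search whose target predicate is itself the PSpace universality check. Your variant buys two things: it is self-contained (no appeal to an external openness-checking procedure), and it cleanly absorbs the nondeterminism of $live(\cA)$ --- whether $uv \in good(live(\cA))$ depends on the set of \emph{all} states of $live(\cA)$ reachable on $uv$, which your subset state $T$ records exactly, whereas the paper's single-state tracking requires the additional argument that the choice of run on $u$ is immaterial. The price is one extra level of nesting (a reachability problem whose acceptance predicate is only PSpace-decidable rather than trivial), but your closing observation that every level stores only polynomial-size configurations, so the space bounds compose under Savitch, is exactly the point that needs making; as a minor simplification, the inner universality test need only be run at the nondeterministically chosen endpoint of each guessed path rather than at every visited subset, though running it everywhere is harmless.
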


\begin{proof}
  Observe that due to Lemma~\ref{lem:nba:2} and \ref{lem:nba:3}, the
  monitorability of $\cA$ is decidable in PSpace if and only if it can
  be checked in PSpace, whether the following is true:
  \begin{equation}
    \forall u \in
    good(safe(\cA)).\ \exists v \in \Sigma^\ast.\ uv\Sigma^\omega
    \subseteq \cL(live(\cA)).
  \end{equation}
  However, instead of giving an algorithm for checking if, for some
  $\cA$, this property holds, we devise an algorithm that returns $true$
  if the complementary statement holds, i.e., if
  \[
  \exists u \in good(safe(\cA)).\ \forall v \in \Sigma^\ast.\
  uv\Sigma^\omega \not\subseteq \cL(live(\cA))
  \]
  is true.
  For some $\cA$ this is the case if there exists some finite word $u
  \in good(sa\-fe(\cA))$, such that $u$ cannot be extended to be a good
  prefix of $\cL(live(\cA))$.
  Let, therefore, $\cG$ be the tight automaton over $\Sigma$, such that
  $\cL(\cG) = good(safe(\cA))$.
  As pointed out in Remark~\ref{rem:tight}, $\cG$ may, in the worst-case
  be of exponential size wrt.\ $safe(\cA)$, which stems from the fact
  that a standard subset construction needs to be applied.  In our case
  this means that the states of $\cG$ are the exponentially many sets of
  states of $safe(\cA)$.
  %
  %
  Therefore, we can only guess, in a step-wise manner, a path through
  $\cG \times live(\cA)$, corresponding to a word $u\in\Sigma^\ast$, to
  a pair of states $(q, q')$, where $q$ is now a set of states of
  $safe(\cA)$.  Note that using $\delta$ of $\cA$, we can easily check
  the connectedness of two states in $\cG \times live(\cA)$ in PSpace.
  Next, we check if $q$ is an accepting state in $\cG$, which, by
  Proposition~\ref{prop:nba:tight}, is the case if and only if
  the states $q_0, \ldots, q_n \in q$ are such that $\cL(safe(\cA)(q_0))
  \cup \ldots \cup \cL(safe(\cA)(q_n)) = \Sigma^\omega$.
  %
  %
  This property can be checked in PSpace in the size of $safe(\cA)$,
  because the union of two NBAs is of polynomial size and determining
  language equivalence of two NBAs is a PSpace-complete problem
  \cite{DBLP:conf/icalp/SistlaVW85}.
  Moreover, as $q'$ was reached on $u$,
  %
  %
  we have $uv\Sigma^\omega \not\subseteq \cL(live(\cA))$ for all
  possible extensions $v \in \Sigma^\ast$ if and only if $live(\cA)$
  does not contain a state $p$ that is reachable from $q'$, such that
  $\cL(live(\cA)(p))$ is open.
  Using the algorithm presented in \cite{bauer:pinchinat:deds09}, which
  we have employed before, this can be checked in PSpace as well.
  So, if no such $p$ exists and $q$ is an accepting state in $\cG$, then
  obviously $uv\Sigma^\omega \not\subseteq \cL(live(\cA))$ for some $u$
  and all its possible extensions $v \in \Sigma^\ast$ and, consequently,
  our algorithm returns $true$.
  
  This procedure for checking if the complement of $(2)$
  holds 
  is nondeterministic and does not use more than polynomial space wrt.\
  the size of $\cA$, and hence is in NPSpace.  Again, as NPSpace =
  PSpace = co-PSpace, the statement follows.
  %
  \qed
\end{proof}

\begin{theorem}
  \label{thm:nba:hardness}
  The monitorability problem of \buchi automata is PSpace-complete.
\end{theorem}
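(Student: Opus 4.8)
Since the preceding theorem already places the problem in PSpace, it suffices to establish PSpace-hardness. The plan is to reduce, in polynomial time, from the universality problem for \buchi automata---``given an NBA $\cB$ over $\Sigma$, does $\cL(\cB) = \Sigma^\omega$ hold?''---which is well known to be PSpace-complete (cf.\ \cite{DBLP:conf/icalp/SistlaVW85}, where even language equivalence of two NBAs is shown PSpace-complete; universality is the instance in which one automaton is fixed to be universal). Crucially, although the \emph{analysis} of the constructed instance will refer to words outside $\cL(\cB)$, the \emph{construction} itself must use $\cB$ only positively, since complementing $\cB$ would incur an exponential blow-up and destroy the polynomiality of the reduction.

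Given $\cB$, I would introduce a fresh symbol $\# \notin \Sigma$, set $\Sigma_\# := \Sigma \cup \{\#\}$, and build (in polynomial time) an NBA $\cA$ over $\Sigma_\#$ whose language is $D := \{ w \in \Sigma_\#^\omega \mid w \text{ has infinitely many } \# \} \cup \{ w \in \Sigma_\#^\omega \mid w \text{ has only finitely many } \# \text{ and the maximal } \#\text{-free suffix of } w \text{ lies in } \cL(\cB)\}$. The first set is accepted by a two-state ``$\ltlG\ltlF\#$'' gadget; the second by a component that reads $\Sigma_\#$ freely, nondeterministically guesses that the $\#$ just read (or the very start of the word) is the last $\#$, then simulates $\cB$ on the remaining letters, aborting the run should a further $\#$ appear. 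Taking the nondeterministic union of the two components yields an automaton with $O(|\cB|)$ states. The claim to prove is then: $\cA$ is monitorable if and only if $\cL(\cB) = \Sigma^\omega$.

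For the easy direction I would note that if $\cL(\cB) = \Sigma^\omega$, then every finite-$\#$ word also lies in $D$, so $D = \Sigma_\#^\omega$, which is clopen and hence monitorable by Proposition~\ref{prop:safe}. For the converse I would first observe that $D$ is \emph{always} a liveness language, since appending $\#^\omega$ to any prefix lands in $D$, whence $bad(D) = \emptyset$. Then, whenever $\cL(\cB) \neq \Sigma^\omega$, I would fix some $\beta \in \Sigma^\omega \setminus \cL(\cB)$ and argue that an arbitrary prefix $u$ can be continued by $\#\beta$ to a word with finitely many $\#$ whose maximal $\#$-free suffix is exactly $\beta \notin \cL(\cB)$; hence $u \notin good(D)$, so $good(D) = \emptyset$. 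Proposition~\ref{prop:live} then makes $\cA$ non-monitorable, which completes the reduction and, together with the PSpace upper bound, the theorem.

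The step I expect to be the main obstacle is the second automaton component: it must accept \emph{exactly} the words whose maximal $\#$-free suffix is in $\cL(\cB)$, rather than merely some $\#$-free suffix, and this is what forces the abort-on-later-$\#$ mechanism together with a careful treatment of the zero-$\#$ case. Establishing $good(D) = \emptyset$ in the non-universal case---the $\#\beta$ escape argument above---is the conceptual heart, and it mirrors the r\^ole played by the satisfiability of $\varphi$ in the LTL hardness proof, where $\ltlG\ltlF(a' \wedge \varphi)$ likewise has neither a good nor a bad prefix precisely when its inner condition is satisfiable.
\qed
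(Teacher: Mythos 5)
Your proof is correct, but it takes a genuinely different route from the paper's. The paper reduces from universality of an NFA over \emph{finite} words ($\cL(\cB)=\Sigma^\ast$, citing \cite{578533}): it makes two disjoint alphabet copies $\Sigma_1,\Sigma_2$, lifts $\cL(\cB)$ to $B_1,B_2$, and builds an NBA for the three-part language $((\Sigma_1\cup\Sigma_2)^\ast(B_1B_2\cup B_2B_1))^\omega\cup((\Sigma_1\cup\Sigma_2)^\ast B_1)^\omega\cup((\Sigma_1\cup\Sigma_2)^\ast B_2)^\omega$; non-universality of $\cB$ is then shown to make this language dense but not open and not universal, i.e.\ a liveness property with no good prefixes, using closure of dense sets under union. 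You instead reduce from NBA universality over \emph{infinite} words (equally standard PSpace-hardness, consistent with the paper's own use of \cite{DBLP:conf/icalp/SistlaVW85}), via a fresh marker $\#$: your $D$ is liveness by construction ($u\#^\omega\in D$ for every $u$), and your explicit $\#\beta$-escape argument shows $good(D)=\emptyset$ exactly when $\cB$ is non-universal, while universality collapses $D$ to $\Sigma_{\#}^\omega$. Conceptually both proofs hinge on the same core---manufacturing a language that is universal (hence monitorable) iff the source instance is universal, and otherwise a liveness language with empty good-prefix set, killed by Proposition~\ref{prop:live}---and you correctly identified the parallel with the LTL hardness gadget $\ltlG\ltlF(a'\wedge\varphi)$. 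Your version buys a few things: the last-$\#$ guess-and-abort component is a simpler, checkable automaton construction than the paper's three-part $\omega$-expression; your $good(D)=\emptyset$ argument is more direct than the paper's somewhat informal ``dense but not open'' claim for $L_i,L_{ii},L_{iii}$; and you need no special case for $\cL(\cB)=\emptyset$ (it is subsumed by non-universality, whereas the paper must separately emit a $\ltlG\ltlF a$ automaton there). Two cosmetic points only: Propositions~\ref{prop:safe} and~\ref{prop:live} are stated for LTL formulae $\varphi$, so you should note they hold verbatim for arbitrary $\omega$-languages (their proofs use nothing LTL-specific); and your reduction changes the alphabet from $\Sigma$ to $\Sigma\cup\{\#\}$, which is harmless under the paper's problem definition but worth stating, since the paper's reduction likewise changes the alphabet.
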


\begin{proof}
  It is sufficient to show PSpace-hardness.
  We proceed by reducing the PSpace-complete problem of checking if some
  NFA $\cB$ over some alphabet $\Sigma = \{ a_1, \ldots, a_n \}$, is
  such that $\cL(\cB) = \Sigma^\ast$ \cite{578533}.  In other words, we
  construct for $\cB$ in at most polynomial time an NBA, $\cA$, such
  that $\cA$ is monitorable if and only if $\cL(\cB) = \Sigma^\ast$.

  Let us first check if $\cL(\cB) = \emptyset$ is true.  It is well
  known that this can be done in polynomial time (cf.\ \cite{578533}).
  If the answer is ``yes'', we return the NBA, $\cA$, which corresponds
  to the models of the LTL formula $\ltlG\ltlF a$ over the alphabet
  $\Sigma' := \{a, b\}$, which by Proposition~\ref{prop:live} is
  non-monitorable.

  If $\cL(\cB) \neq \emptyset$, we proceed as follows.  Let $\Sigma_1 :=
  \{ a^1_1, \ldots, a^1_n \}$, $\Sigma_2 := \{ a^2_1, \ldots, a^2_n
  \}$ be alphabets.  Let us construct, in linear time, an NFA, $\cB_1$,
  respectively $\cB_2$, which is like $\cB$, except that it accepts
  $\cL(\cB)$ projected onto $\Sigma_1$, respectively onto $\Sigma_2$.
  Let $B_1$, respectively $B_2$, be the language accepted by $\cB_1$,
  respectively $\cB_2$.
  We now construct an NBA $\cA$, such that it accepts the following
  language, split into three parts for readability:
  \[
  \begin{array}{rll}
    (i) & & ((\Sigma_1 \cup \Sigma_2)^\ast (B_1B_2 \cup B_2B_1))^\omega\\
    (ii) & & \cup\ ((\Sigma_1 \cup \Sigma_2)^\ast B_1)^\omega\\
    (iii) & & \cup\ ((\Sigma_1 \cup \Sigma_2)^\ast B_2)^\omega.\\
  \end{array}
  \]
  It is easy to see that $\cA$ can be constructed in time no more than
  polynomial wrt.\ the size of $\cB$.  We now prove the following two
  claims.

  Let $\cL(\cB) = \Sigma^\ast$, then $\cA$ is monitorable:
  Notice first that a word $w \in (\Sigma_1 \cup \Sigma_2)^\omega$
  either is
  \begin{itemize}
  \item an alternation of finite words over $\Sigma_1$ and $\Sigma_2$,
  \item entirely over $\Sigma_1$ (respectively, $\Sigma_2$),
  \item an alternation of finite words over $\Sigma_1$ and $\Sigma_2$,
    followed by an infinite word over $\Sigma_1$ or $\Sigma_2$.
  \end{itemize}
  One can easily verify that all these cases are covered by the language
  accepted by $\cA$.  Hence, if $\cL(\cB) = \Sigma^\ast$, then $\cL(\cA)
  = \Sigma^\omega$, and therefore $\cA$ is monitorable.

  Let $\cA$ be monitorable, then $\cL(\cB) = \Sigma^\ast$: For assume not,
  that is, we assume $\cA$ is monitorable, but that $\cL(\cB) \neq
  \Sigma^\ast$ holds.
  From the latter it follows that there must exist a finite word $u \in
  \Sigma^\ast$, corresponding to some word $u' \in \Sigma_1^\ast$, such
  that $u \not\in \cL(\cB)$, and consequently $u' \not\in B_1$
  (respectively, for $B_2$).
  Due to way we have chosen the $\omega$-regular expression above, this
  $u'$ implies the existence of a language $L \subseteq (\Sigma_1 \cup
  \Sigma_2)^\omega$, such that $L \not\subseteq \cL(\cA)$; for example,
  we can easily prove that
  $L := (u'(\Sigma_1 \cup \Sigma_2))^\omega$ is not a subset of any of
  the three sets given by $(i)-(iii)$ above, and hence $L \not\subseteq
  \cL(\cA)$.
  Therefore, $\cA$ is not universal over $(\Sigma_1 \cup
  \Sigma_2)^\omega$.
  Notice further that all words $w \in \cL(\cA)$ are such that they
  require infinitely often the occurrence of a finite word $u$ either in
  $(i)$ $B_1B_2$ and interchangeably with $B_2B_1$, $(ii)$ $B_1$, or
  $(iii)$ $B_2$.
  More concretely, all infinite $w$ are such that finite words of the
  form
  \[
  \begin{array}{rll}
    (i) & & a^1_i\ldots a^1_j a^2_k\ldots a^2_l\hbox{ and optionally the mirrored
    version occur infinitely often, or}\\
    (ii) & & a^1_i\ldots a^1_j\hbox{ occurs infinitely often, or }\\
    (iii) & & a^2_i\ldots a^2_j\hbox{ occurs infinitely often,}
  \end{array}
  \]
  where, for all indices $g$, we have $a^1_g \in \Sigma_1$ and $a^2_g
  \in \Sigma_2$.
  In what follows, let $L_i$, $L_{ii}$, and $L_{iii}$ be the languages
  corresponding to the sets given by $(i)$, $(ii)$, and $(iii)$,
  respectively.
  It is obvious that $L_i$, $L_{ii}$ and $L_{iii}$ each define a dense
  but not open set over the words in $(\Sigma_1 \cup \Sigma_2)^\omega$
  as the infinite repetition of a finite word is required in each case.
  Moreover, as dense sets are closed under union \cite{93442}, and $L_i
  \cup L_{ii} \cup L_{iii} = \cL(\cA)$, it follows that $\cA$ defines a
  dense but not open set.  Together with the fact that $\cA$ is not
  universal, it follows that $\cA$ defines a classical liveness
  property, i.e., is not monitorable.  Contradiction.
  \qed
\end{proof}

\subsection{Deciding monitorability---The deterministic case}
\label{sec:nba:dba}

It is well-known that languages expressible by deterministic \buchi
automata (DBAs) are strictly less expressive than the ones accepted by
general (or, nondeterministic) NBAs: For example, one cannot express the
language given by the $\omega$-regular expression $(a + b)^*a^\omega$
over $\Sigma = \{a, b\}$ as can be easily proven.
%
%
%
On the other hand, it is possible to represent all safety and co-safety
languages using DBAs (cf.\ \cite{1217614}), although not every
DBA-representable language is necessarily monitorable as the example
over $\Sigma = \{a, b\}$, depicted in Fig.~\ref{fig:dba}, illustrates:
obviously, the language has neither good nor bad prefix.
Hence, it is reasonable wanting to be able to examine DBAs for their
monitorability as well.
\begin{figure}[tbhp]
  \centering
  \unitlength=4pt
  \begin{picture}(20,14)(0,-5)
    \gasset{Nw=5,Nh=5,Nadjustdist=2,curvedepth=0}
    \thinlines
    \node[Nmarks=ir,iangle=180](Q1)(0,0){$q_0$}
    \node[iangle=90,fangle=0](Q2)(20,0){$q_1$}
    \drawloop[loopangle=90](Q1){$a$}
    \drawloop[loopangle=90](Q2){$b$}
    \drawedge(Q1,Q2){$b$}
    \drawedge[curvedepth=4,ELside=l](Q2,Q1){$a$}
  \end{picture}
  \caption{Deterministic \buchi automaton over $\Sigma = \{a, b\}$
    describing a non-monitorable
    language.} 
  \label{fig:dba}
\end{figure}
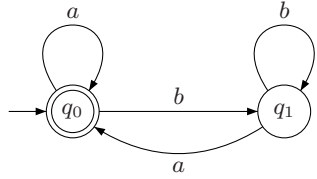
Not surprisingly though, if we know that the automaton in question is
deterministic, we can check its monitorability more efficiently than
before, using the criterion defind in Lemma~\ref{lem:nba:dba}.

However, before examining this condition, let us first make the
following assumption without loss of generality: let $\cA$ be a
\emph{complete} automaton; that is, for each symbol $a \in \Sigma$ and
each state $q \in Q$, there exists a state $q' \in Q$, such that
$\delta(q, a) = q'$.  It is easy to see that completing a deterministic
automaton takes time linear in the size of the automaton: one merely has
to add a ``trap''-state, the corresponding transitions, and self-loops
to it as necessary.  Let us use the symbol $\dagger$ to denote this
special state.
Moreover as a further notational convention, if there exists a path in
$\cA$, i.e., a state-action sequence, from state $q$ to $q'$, we also
write $q \leadsto q'$, or $q \leadsto_u q'$ to denote the fact that the
sequence of actions in this path corresponds to the finite word $u$.

\begin{lemma}
  \label{lem:nba:dba}
  A deterministic \buchi automaton $\cA$, defined as expected, is
  monitorable if and only if for every state $q \in Q$, it holds that
  \begin{itemize}
  \item a path exists such that $q \leadsto \dagger$, or
  \item a path exists, $q \leadsto q'$, with $\cL(\cA(q')) =
    \Sigma^\omega$.
  \end{itemize}
\end{lemma}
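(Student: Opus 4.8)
The plan is to exploit determinism to turn monitorability, which quantifies over \emph{all} finite prefixes, into a purely state-local condition. Since $\cA$ is deterministic and complete, every $u \in \Sigma^\ast$ drives $\cA$ from its unique initial state to exactly one state, which I denote $q_u$; conversely, after eliminating unreachable states (which is without loss of generality and affects neither $\cL(\cA)$ nor monitorability), every state arises as $q_u$ for some $u$. I will also assume, as in \secref{sec:nba}, that all dead-end states were removed before completing $\cA$, so that the trap state $\dagger$ is the \emph{unique} state $q$ with $\cL(\cA(q)) = \emptyset$, every other state having a non-empty language. Under these conventions the quantifier $\forall u \in \Sigma^\ast$ in Proposition~\ref{prop:mon} may be replaced by $\forall q \in Q$.

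The technical heart is a residual-language correspondence. First I would record the standard observation that \buchi acceptance depends only on the infinite suffix of a run: for any state $q$, any $v \in \Sigma^\ast$, and the unique $q'$ with $q \leadsto_v q'$, the finite segment of the run on $v$ contributes only finitely many visits to $F$, so $vw \in \cL(\cA(q))$ if and only if $w \in \cL(\cA(q'))$ for every $w \in \Sigma^\omega$. From this I read off the two characterisations needed. For good prefixes, $uv \in good(\cA)$, i.e.\ $uv\Sigma^\omega \subseteq \cL(\cA)$, holds exactly when $\cL(\cA(q')) = \Sigma^\omega$ for the unique $q'$ with $q_u \leadsto_v q'$; determinism is what turns the inclusion into this plain equality. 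For bad prefixes, $uv \in bad(\cA)$, i.e.\ $uv\Sigma^\omega \cap \cL(\cA) = \emptyset$, holds exactly when $\cL(\cA(q')) = \emptyset$, which by the dead-end convention is equivalent to $q' = \dagger$, i.e.\ to $q_u \leadsto_v \dagger$.

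With these equivalences the lemma follows in both directions. By Proposition~\ref{prop:mon}, $\cA$ is monitorable iff for every $u$ there is a $v$ with $uv \in good(\cA)$ or $uv \in bad(\cA)$. Substituting the two characterisations and writing $q = q_u$, this says precisely that for every (reachable) state $q$ there is a path $q \leadsto q'$ with $\cL(\cA(q')) = \Sigma^\omega$ or a path $q \leadsto \dagger$, which is the stated condition. The trap state $\dagger$ trivially meets the first bullet via the empty path, so it needs no separate treatment.

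I expect the main obstacle to be the careful justification of the residual-language correspondence together with the two ``without loss of generality'' reductions, rather than the final assembly. In particular, the claim that a bad prefix corresponds \emph{precisely} to reaching $\dagger$ rests on $\dagger$ being the only state with empty residual language, which in turn depends on dead-end elimination preceding completion; I would make this explicit and check that it preserves both $\cL(\cA)$ and monitorability. The reachability reduction is likewise needed for the ``only if'' direction, since otherwise an unreachable violating state could spuriously falsify the condition for a monitorable $\cA$.
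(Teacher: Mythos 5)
Your proof is correct and takes essentially the same route as the paper's: both directions use determinism to reduce the existence of good/bad extensions of a word $u$ to a path condition on the unique state reached on $u$, with good prefixes corresponding to reaching a state $q'$ with $\cL(\cA(q')) = \Sigma^\omega$ and bad prefixes to reaching $\dagger$. If anything, you are more careful than the paper, whose proof tacitly relies on exactly the two reductions you make explicit --- that every state is reachable (needed for the ``only if'' direction, since an unreachable violating state would otherwise falsify the condition for a monitorable $\cA$) and that dead-end states are eliminated before completion, so that $\dagger$ is the unique state with empty residual language (without which a word could be extended to a bad prefix without any path to $\dagger$).
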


\begin{proof}
  As $\cA$ is deterministic, let, in what follows, $Q_0 = \{ q_0 \}$.

  If $q \leadsto \dagger$ holds then there exists a prefix $uv \in
  \Sigma^\ast$, such that $uv \in bad(\cA)$ with $q_0 \leadsto_u q$ and
  $q \leadsto_v \dagger$.
  On the other hand, if a path exists, $q \leadsto q'$, such that
  $\cL(\cA(q')) = \Sigma^\omega$, then there exists a prefix $uv \in
  \Sigma^\ast$, such that $uv \in good(\cA)$ with $q_0 \leadsto_u q$ and
  $q \leadsto_v q'$.
  Obviously, if every state implies the existence of either a bad or a
  good prefix, then $\cA$ is monitorable.

  For the other direction, assume the opposite, i.e., that $\cA$ is
  monitorable and that there exists a state $q$, such that
  there do not exist paths $(i)$ $q \leadsto \dagger$ and $(ii)$ $q
  \leadsto q'$, where $\cL(\cA(q')) = \Sigma^\omega$.
  Let $u \in \Sigma^\ast$ be the word defined by $q_0 \leadsto_u q$.
  From $(i)$ it follows that $u$ cannot be extended to be a bad prefix
  for $\cA$.  From $(ii)$ it follows that $u$ cannot be extended to be a
  good prefix for $\cA$.  Hence, $u$ is an ``ugly prefix'', and $\cA$
  not monitorable.  Contradiction.
  \qed
\end{proof}

\begin{theorem}
  \label{thm:dba}
  The monitorability problem of \buchi automata, when the automata are
  deterministic, can be solved in polynomial time.
\end{theorem}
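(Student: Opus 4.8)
The plan is to turn the characterisation of Lemma~\ref{lem:nba:dba} into a polynomial-time algorithm. First I would, following that lemma, complete $\cA$ by adding the trap state $\dagger$ and, as already assumed without loss of generality elsewhere, discard any states not reachable from $q_0$; both steps are linear in the size of $\cA$. By Lemma~\ref{lem:nba:dba} it then remains to decide, for every (reachable) state $q$, whether a path $q \leadsto \dagger$ exists, or a path $q \leadsto q'$ with $\cL(\cA(q')) = \Sigma^\omega$ exists. The first disjunct is plain reachability in the transition graph of $\cA$ and is clearly polynomial; the whole difficulty is therefore concentrated in computing the set $U := \{ q' \in Q \mid \cL(\cA(q')) = \Sigma^\omega \}$ of \emph{universal} states.

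The key observation---and the step where determinism is essential---is that $\cL(\cA(q')) \neq \Sigma^\omega$ holds if and only if $q'$ can reach a cycle lying entirely within $Q \setminus F$. Indeed, since $\cA$ is deterministic and complete, every $w \in \Sigma^\omega$ has a unique run from $q'$, and $w$ is rejected precisely when this run visits $F$ only finitely often. If such a non-final cycle $C$ is reachable, following a path from $q'$ into $C$ and then looping around $C$ forever yields a word whose run visits $F$ only finitely often, hence a rejected word; conversely, any rejected word induces a run whose suffix avoids $F$, and by finiteness of $Q$ this suffix revisits some state, exhibiting a reachable non-final cycle. For nondeterministic \buchi automata no such local criterion is available, which is exactly why universality there is PSpace-complete (cf.\ Theorem~\ref{thm:nba:hardness}).

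This yields a polynomial procedure for $U$. I would restrict the transition graph to the vertex set $Q \setminus F$, compute its strongly connected components, and collect the set $C$ of states lying on a cycle of this restricted graph (i.e.\ states in an SCC that carries an edge, self-loops included). A single backward traversal in the \emph{full} transition graph then gives the set of states that can reach $C$, and its complement is exactly $U$; every part of this is linear-time graph manipulation (SCC decomposition plus reachability).

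Finally I would perform two more backward traversals to compute the set of states that can reach $\dagger$ and the set of states that can reach $U$. By Lemma~\ref{lem:nba:dba}, $\cA$ is monitorable if and only if every reachable state lies in the union of these two sets, which is again a linear-time check; the overall algorithm thus runs in (low-degree) polynomial time in the size of $\cA$. I expect the main obstacle to be isolating and justifying the universality test for $\cL(\cA(q'))$: the reduction to detecting a reachable non-final cycle is the crux, and it is worth stressing that this shortcut is sound only because $\cA$ is deterministic, which is precisely what separates the deterministic case from the PSpace-complete general case.
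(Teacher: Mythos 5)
Your proposal is correct, and it follows the paper's overall skeleton---complete $\cA$ with the trap state $\dagger$, then decide the two conditions of Lemma~\ref{lem:nba:dba} by reachability to $\dagger$ plus a universality test for the languages $\cL(\cA(q'))$---but it resolves the crux step by a genuinely different and more elementary route. The paper disposes of the universality check by citing Kurshan's construction of a $2|Q|$-state automaton $\cA'$ with $\cL(\cA') = \overline{\cL(\cA)}$ and then testing $\cL(\cA') = \emptyset$, which is known to be solvable in linear time (LogSpace-complete for NLogSpace); you instead prove directly that, by determinism and completeness, $\cL(\cA(q')) \neq \Sigma^\omega$ holds if and only if $q'$ can reach a cycle lying entirely within $Q \setminus F$, and both directions of your equivalence are sound: determinism guarantees that the word obtained by entering and then forever repeating the non-final cycle has exactly the intended run (this is precisely where nondeterminism would break the argument), and finiteness of $Q$ extracts a reachable non-final cycle from the run of any rejected word. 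Your resulting algorithm---one SCC decomposition of the graph restricted to $Q \setminus F$ (taking care, as you do, to count only SCCs carrying an edge), followed by backward traversals to the cycle states, to $\dagger$, and to the universal set $U$---is self-contained, avoids any complementation construction, and is explicitly linear-time overall, whereas the paper's per-state formulation settles for a polynomial bound by delegating to known results. One small point worth making explicit: Lemma~\ref{lem:nba:dba} as stated quantifies over \emph{all} states of $Q$, while you restrict attention to states reachable from $q_0$; this is harmless, since unreachable states cannot affect monitorability and the lemma's backward direction only ever instantiates $q$ via $q_0 \leadsto_u q$, but the discrepancy deserves a sentence of justification.
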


\begin{proof}
  Recall, completion of $\cA$ takes linear time wrt.\ the size of $\cA$.
  So, without loss of generality, we assume the input automaton $\cA$
  complete already.
  Checking the condition of Lemma~\ref{lem:nba:dba} for $\cA$ means
  iterating through the $|Q|$ states of $\cA$ and checking for each $q
  \in Q$ whether any of the two sub-conditions holds.
  
  Using depth-first search, it is easy to see that the first condition
  can be checked in polynomial time (in fact, in time $O(|Q| + |Q| \cdot
  |\Sigma|) = O(|Q| \cdot |\Sigma|)$ as there are $|Q| \cdot |\Sigma|$
  transitions in a complete automaton).

  The second condition involves checking for each reachable state, $q'$,
  from state $q$, whether or not $\cL(\cA(q')) = \Sigma^\omega$.
  In the general case, i.e., when $\cA$ is nondeterministic, the latter
  problem is known to be PSpace-complete in the size of $\cA$.  However,
  as $\cA$ is deterministic, this condition can, in fact, be checked in
  time linear wrt.\ the size of $\cA$:
  In \cite{Kurshan198759}, Kurshan outlines a construction for a DBA
  $\cA'$, such that $\cL(\cA') = \overline{\cL(\cA)}$, where $\cA'$ has
  only $2|Q|$ states.  Now, checking if $\cL(\cA') = \emptyset$ holds is
  known to be LogSpace-complete for NLogSpace \cite{191907} and
  clearly the case if and only if $\cL(\cA) = \Sigma^\omega$ holds.

  From these two observations it now easily follows that checking both
  conditions of Lemma~\ref{lem:nba:dba} can be done in no more than
  polynomial time wrt.\ the size of $\cA$.
  \qed
\end{proof}

Finally, observe that the non-monitorable DBA depicted in
Fig.~\ref{fig:dba} is complete for $\Sigma = \{a, b\}$, but 
\emph{incomplete} and \emph{monitorable} for $\Sigma = \{ a, b, c \}$.


\section{Closure of the monitorable $\omega$-languages}
\label{sec:closure}
We now examine closure properties of monitorable $\omega$-languages.
Let us fix two languages $L,M\subseteq \Sigma^\omega$ for the remainder
of this section.

\begin{proposition}
  Let $L$ and $M$ be monitorable, then $L \cap M$ is monitorable.
\end{proposition}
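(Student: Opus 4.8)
The plan is to verify monitorability of $L \cap M$ directly from Definition~\ref{def:mon1} (via Proposition~\ref{prop:mon}): given an arbitrary $u \in \Sigma^\ast$, I must exhibit an extension $w \in \Sigma^\ast$ with $uw \in good(L \cap M)$ or $uw \in bad(L \cap M)$. The whole argument rests on two elementary observations. First, since $L \cap M \subseteq L$ and $L \cap M \subseteq M$, any bad prefix of $L$ or of $M$ is automatically a bad prefix of $L \cap M$: if $x\Sigma^\omega \cap L = \emptyset$ then $x\Sigma^\omega \cap (L \cap M) = \emptyset$, and symmetrically for $M$. Second, good prefixes are closed under extension, which is immediate from the definition: if $x \in good(L)$, i.e.\ $x\Sigma^\omega \subseteq L$, then for any $z \in \Sigma^\ast$ we have $xz\Sigma^\omega \subseteq x\Sigma^\omega \subseteq L$, so $xz \in good(L)$.

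The idea is then to \emph{stack} the two monitorability hypotheses. First I would apply monitorability of $L$ to $u$, obtaining some $v_1 \in \Sigma^\ast$ with $uv_1 \in good(L)$ or $uv_1 \in bad(L)$. If $uv_1 \in bad(L)$, the first observation gives $uv_1 \in bad(L \cap M)$, and I am done with $w = v_1$. Otherwise $uv_1 \in good(L)$, and I would next apply monitorability of $M$ to the word $uv_1$, obtaining $v_2 \in \Sigma^\ast$ with $uv_1v_2 \in good(M)$ or $uv_1v_2 \in bad(M)$. In the bad case, again by the first observation $uv_1v_2 \in bad(L \cap M)$ and I take $w = v_1v_2$.

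The only genuinely non-routine case is the last one, where both extensions yield good prefixes: $uv_1 \in good(L)$ and $uv_1v_2 \in good(M)$. Here one must check that appending $v_2$ to secure a good prefix of $M$ does not spoil the good-prefix-of-$L$ property already obtained for $uv_1$. This is exactly where the second observation (monotonicity of good prefixes under extension) is used: from $uv_1 \in good(L)$ it follows that $uv_1v_2 \in good(L)$ as well, hence $uv_1v_2\Sigma^\omega \subseteq L$ and $uv_1v_2\Sigma^\omega \subseteq M$, so $uv_1v_2\Sigma^\omega \subseteq L \cap M$, i.e.\ $uv_1v_2 \in good(L \cap M)$. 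Taking $w = v_1v_2$ completes this case.

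I do not expect a serious obstacle: the bad cases collapse immediately by the subset inclusions, and the delicate good/good case is handled by monotonicity of good prefixes, which is why it pays to resolve $L$ first and $M$ second rather than treating the two symmetrically. Since $u$ was arbitrary, this shows $L \cap M$ is $u$-monitorable for every $u$, hence monitorable.
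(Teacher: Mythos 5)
Your proof is correct and follows essentially the same route as the paper: resolve monitorability of $L$ first, note that a bad prefix of either language is bad for $L \cap M$, and in the good/good case use that good prefixes of $L$ persist under extension so that $uv_1v_2 \in good(L \cap M)$. You are in fact slightly more explicit than the paper, which silently absorbs the first extension $v_1$ into $u$ before invoking monitorability of $M$, but the argument is the same.
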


\begin{proof}
  Since $L$ is monitorable two cases arise: every $u \in \Sigma^\ast$ is
  extensible to be a good prefix of $L$ or to a bad prefix of $L$ (or
  both, but this case is covered in the following):
  
  $(i)$ Let us fix some $u \in \Sigma^\ast$, such that $u\Sigma^\omega
  \cap L = \emptyset$ holds.  Then, irrespective of $M$, $u\Sigma^\omega
  \cap L \cap M = \emptyset$, and hence $u \in bad(L \cap M)$.
  $(ii)$ Let us fix some $u \in \Sigma^\ast$, such that $u\Sigma^\omega
  \subseteq L$ holds.
  Then, by the monitorability of $M$,
  \[
  \exists v \in \Sigma^\ast.\ uv\Sigma^\omega \cap M = \emptyset \vee
  uv\Sigma^\omega \subseteq M.
  \]
  As before, if $uv\Sigma^\omega \cap M = \emptyset$, then
  $uv\Sigma^\omega \cap M \cap L = \emptyset$ and hence $uv \in bad(L
  \cap M)$.  On the other hand, if $uv\Sigma^\omega \subseteq M$, then
  $uv\Sigma^\omega \subseteq L \cap M$ and, consequently, $uv \in good(L
  \cap M)$.
 
  As all finite words are extensible to be either good or bad prefixes
  of $L$, and in either case it is possible to find a good or a bad
  prefix of $L \cap M$, we conclude that $L \cap M$ is monitorable as
  well
  \qed
\end{proof}

\begin{proposition}
  Let $L$ be monitorable, then $\overline{L}$ is monitorable.
\end{proposition}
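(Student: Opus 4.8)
The plan is to exploit the perfect symmetry between good and bad prefixes under complementation, which is already recorded in Proposition~\ref{prop:prefixes}. The key point is that the monitorability condition from Proposition~\ref{prop:mon}, namely $\forall u \in \Sigma^\ast.\ \exists v \in \Sigma^\ast.\ uv \in good(L) \vee uv \in bad(L)$, is a \emph{disjunction} that treats $good$ and $bad$ entirely interchangeably. Complementation merely swaps these two roles, so the condition is preserved.

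Concretely, I would first restate what we must prove: $\overline{L}$ is monitorable iff $\forall u \in \Sigma^\ast.\ \exists v \in \Sigma^\ast.\ uv \in good(\overline{L}) \vee uv \in bad(\overline{L})$. Next I would invoke Proposition~\ref{prop:prefixes}, which gives $good(\overline{L}) = bad(L)$ and $bad(\overline{L}) = good(L)$. Substituting these equalities into the target condition turns it into $\forall u \in \Sigma^\ast.\ \exists v \in \Sigma^\ast.\ uv \in bad(L) \vee uv \in good(L)$, and by commutativity of $\vee$ this is literally the statement that $L$ is monitorable, which holds by hypothesis.

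There is essentially no technical obstacle here; the entire content is the recognition that monitorability is self-dual because it requires only that every prefix be extensible to \emph{some} conclusive verdict, without caring whether that verdict is positive or negative. The one thing worth spelling out for the reader is the explicit substitution via Proposition~\ref{prop:prefixes}, so that it is transparent that the witnessing extension $v$ for $\overline{L}$ after a prefix $u$ can be taken to be exactly the extension that monitorability of $L$ already supplies for $u$. Hence the proof reduces to a one-step rewriting and I would present it as such.
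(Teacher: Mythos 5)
Your proof is correct and follows exactly the paper's own argument: the paper's entire proof reads ``Follows directly from applying Proposition~\ref{prop:prefixes} to Proposition~\ref{prop:mon},'' which is precisely the substitution $good(\overline{L}) = bad(L)$ and $bad(\overline{L}) = good(L)$ into the monitorability condition that you carry out. You have merely made the one-step rewriting explicit, which is a fine way to present it.
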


\begin{proof}
  Follows directly from applying Proposition~\ref{prop:prefixes} to
  Proposition~\ref{prop:mon}.
  \qed
\end{proof}

\begin{theorem}
  \label{thm:closure}
  The monitorable $\omega$-languages are closed under (finitary)
  application of intersection, complement, and union.
\end{theorem}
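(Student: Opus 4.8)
The plan is to prove Theorem~\ref{thm:closure} by observing that the two preceding propositions already do almost all the work. Those propositions establish that the class of monitorable $\omega$-languages is closed under binary intersection and under complement. The theorem asks for closure under \emph{finitary} application of intersection, complement, and union, so the remaining tasks are to (a) lift binary closure to arbitrary finite closure, and (b) derive union closure from the two operations already in hand.

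First I would note that closure under a binary operation immediately yields closure under any finite iterate of that operation by a trivial induction on the number of operands: if $L_1, \ldots, L_n$ are monitorable, then $L_1 \cap \cdots \cap L_n = (L_1 \cap \cdots \cap L_{n-1}) \cap L_n$ is monitorable by the inductive hypothesis together with the binary intersection proposition. Hence ``finitary'' adds nothing beyond the binary case, and I would state this explicitly but not belabour it.

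The one genuine gap to fill is union, which is not proved as a standalone proposition above. The clean way to obtain it is via De~Morgan's law: for monitorable $L$ and $M$, write $L \cup M = \overline{\overline{L} \cap \overline{M}}$. By the complement proposition, $\overline{L}$ and $\overline{M}$ are monitorable; by the intersection proposition their intersection is monitorable; and by the complement proposition once more, the complement of that intersection—namely $L \cup M$—is monitorable. This reduces union entirely to the two operations already established, and the finite-union case then follows by the same induction as for intersection.

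I do not expect a serious obstacle here, since the hard analytic content (handling good and bad prefixes under intersection, and the duality $bad(L) = good(\overline{L})$ from Proposition~\ref{prop:prefixes} that powers the complement case) is already discharged in the preceding proofs. The only point requiring minor care is making sure the De~Morgan reduction is applied to \emph{$\omega$-languages over the fixed alphabet} $\Sigma$, so that complementation means $\overline{L} = \Sigma^\omega \setminus L$ consistently throughout; with that convention the identity $L \cup M = \overline{\overline{L} \cap \overline{M}}$ holds set-theoretically and the argument goes through verbatim. So the proof is essentially a short assembly step, combining the two propositions with De~Morgan and a one-line induction.
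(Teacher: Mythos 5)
Your proposal is correct and matches the paper's own proof, which likewise derives union closure from the two preceding propositions via the De~Morgan identity $L \cup M = \overline{\overline{L} \cap \overline{M}}$. Your explicit remarks on the one-line induction for the finitary case and on complementation being relative to $\Sigma^\omega$ are sound elaborations of details the paper leaves implicit.
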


\begin{proof}
  Follows now easily from the fact that $L \cup M =
  \overline{\overline{L} \cap \overline{M}}$.
  \qed
\end{proof}



\section{Conclusions}
\label{sec:end}

\label{sec:conc}

The formal concept of monitorability of an $\omega$-regular language was
first introduced by Pnueli and Zaks \cite{DBLP:conf/fm/PnueliZ06}.
%
A subsequent result of \cite{tum-i0724} implies that the monitorability
problem of LTL and NBAs as laid out in Sec.~\ref{sec:mon} is, in fact,
decidable using a 2ExpSpace algorithm, whereas
\cite{DBLP:conf/rv/FalconeFM09} recently could show that the monitorable
$\omega$-languages are strictly more expressive than the commonly used
set of safety properties (and, in fact, an unrestricted Boolean
combination thereof), known to be PSpace-complete when the language is
given by an LTL formula or an NBA.
The present paper closes the exponential ``gap'' that lies between these
observations, in that it shows that the monitorability problem of LTL
and NBAs are both, in fact, PSpace-complete (unless, of course, the NBAs
are, in fact, deterministic).
%

Besides being of theoretical merit in order to being able to classify
the monitorable $\omega$-regular languages wrt.\ existing
classifications such as the safety-progress hierarchy, a practical
interpretation of this result is that checking the monitorability of a
formal specification, given as LTL formula or by an NBA, is
computationally as involved as checking, say, if the specification
defines a safety language.
Moreover, knowing the upper bound of the problem, we can devise new and
probably faster algorithms than \cite{tum-i0724} for checking the
monitorability of specifications, such that users can determine---prior
to the actual runtime verification process, or any attempts to build a
monitor---whether or not their specifications are monitorable at all.

As a final theoretical contribution, it is worth pointing out that,
using the results of this paper, one can easily show that the
monitorability problem of $\omega$-regular expressions is also
PSpace-complete, since there exists a polynomial time transformation
from $\omega$-regular expressions to NBAs, which yields membership in
PSpace.  Together with the proof of Theorem~\ref{thm:nba:hardness},
where we used $\omega$-regular expressions, we then obtain completeness
for this problem.




\paragraph{Acknowledgements.}
%
%
Alban Grastien helped remove a bug in an earlier proof of what later
became Lemma~\ref{lem:nba:2}.
Many hours of fruitful discussion with Jussi Rintanen on the upper
bounds of the monitorability problem of LTL are gratefully acknowledged.




\bibliographystyle{plain}
\bibliography{shortstrings,bibliography}

\end{document}